\newtheorem{adefinizione}{Definition}[section]
\newtheorem{afact}[adefinizione]{Fact}
\newtheorem{alemma}[adefinizione]{Lemma}
\newtheorem{teorema}[adefinizione]{Theorem}
\newtheorem{corollario}[adefinizione]{Corollary}
\newtheorem{proposizione}[adefinizione]{Proposition}
\newtheorem{claim}[adefinizione]{Claim}
\newtheorem{esempio}[adefinizione]{Example}
\newenvironment{definition}{\begin{adefinizione}}{\end{adefinizione}}
\newenvironment{lemma}{\begin{alemma}}{\end{alemma}}
\newenvironment{theorem}{\begin{teorema}}{\end{teorema}}
\newenvironment{corollary}{\begin{corollario}}{\end{corollario}}
\newenvironment{proposition}{\begin{proposizione}}{\end{proposizione}}
\newenvironment{example}{\begin{esempio}}{\end{esempio}}
\newcommand{\BNDC}{\mbox{\it BNDC}}
\newcommand{\NDC}{\mbox{\it NDC}}
\newcommand{\INI}{\mbox{\it INI}}
\newcommand{\mathbox}[1]{\mbox{{\small \mbox{$ #1 $}}}}
\newcommand{\sta}[3]{\mathbox{#1 \stackrel{#2}{\longrightarrow} #3}}
\def\bbbn{{\rm I\!N}}
\title{On the Decidability of Non Interference over Unbounded Petri Nets}
\author{Eike Best
\institute{Universit\"at Oldenburg,
26111 Oldenburg, Germany\\
\email{eike.best@informatik.uni-oldenburg.de}}
\and
Philippe Darondeau
\institute{Inria Rennes - Bretagne Atlantique,
Rennes, France\\
\email{Philippe.Darondeau@inria.fr}}
\and
Roberto Gorrieri 
\institute{Dipartimento di Scienze dell'Informazione,
Universit\`a di Bologna, Bologna, Italy\\
\email{gorrieri@cs.unibo.it}}
}
\begin{document}
\maketitle

\begin{abstract}
Non-interference, in transitive or intransitive form, is defined here 
over unbounded (Place/Transition) Petri nets. The definitions are 
adaptations of similar, well-accepted definitions introduced earlier in 
the framework of labelled transition systems \cite{FG95,FG01,GV09}. The 
interpretation of intransitive non-interference which we propose for 
Petri nets is as follows. A Petri net represents the composition of a 
controlled and a controller systems, possibly sharing places and 
transitions. Low transitions represent local actions of the controlled 
system, high transitions represent local decisions of the controller, 
and downgrading transitions represent synchronized actions of both 
components. Intransitive non-interference means the impossibility for 
the controlled system to follow any local strategy that would force or 
dodge synchronized actions depending upon the decisions taken by the 
controller after the last synchronized action. The fact that both 
language equivalence and bisimulation equivalence are undecidable for 
unbounded labelled Petri nets might be seen as an indication that 
non-interference properties based on these equivalences cannot be 
decided. We prove the opposite, providing results of decidability of 
non-interference over a representative class of infinite state systems.
\end{abstract}

\section{Introduction}

Non-interference has been defined in the literature as an extensional property based on some observational semantics: the high part $H$ (i.e., the secret part) of a system does not interfere with the low part $L$ (i.e., the public part) if whatever is done in $H$ produces {\em no visible effect} on $L$. The original notion of non-interference in \cite{GM} was defined, using language equivalence, for deterministic automata with outputs. Generalized notions of non-interference were then designed to include (nondeterministic) labelled transition systems and finer notions of observational semantics such as bisimulation
(see, e.g., \cite{Ryan,FG95,RS,FG01,McC,WJ}). Recently, the problem of defining suitable non-interference properties has been attacked also in the classical model of elementary Petri nets, a special class of Petri nets where places can contain at most one token \cite{BG04apn,BG09}. When it is necessary to declassify information (e.g., when a secret plan has to be made public for realization), the two-level approach (secret/public -- $H$/$L$) is usually extended with one intermediate level of downgrading ($D$), so that the high actions that have been performed prior to a declassifying action are made public by this declassifying action. This security policy is known under the name of {\em intransitive} noninterference \cite{Rus92} (\INI\ for short) because the information flow relation is considered not transitive: even if information flows from $H$ to $D$ and from $D$ to $L$ are allowed, direct flows from $H$ to $L$ are forbidden. In \cite{GV09} intransitive non-interference has been defined for elementary net systems. 

The technical goal of this paper is to show the decidability of intransitive non-interference in the extended framework of unbounded (Place/Transition) Petri nets, and this for both definitions based alternatively on language equivalence or on weak bisimulation equivalence. As both equivalences are undecidable for unbounded labelled Petri nets \cite{Hack} \cite{Jancar}, the decidability of intransitive non-interference is not a trivial result. This is however not the first result of this type for infinite-state systems. It was actually shown in \cite{Dam} that Strong Low Bisimulation and Strong Security which is based on the latter equivalence can be decided for {\em Parallel While Programs} defined over expressions from decidable first order theories. Decidability is also established in \cite{Dam} for Strong Dynamic Security that takes both downgrading and upgrading into account. In that work, decidability comes for a large part from the property of Strong Low Bisimulation to envisage implicitly through its recursive definition all possible modifications of the dynamic store by a concurrent context (without any effective definition). In our work, decidability comes also for a large part from the fact that our basic security properties are \NDC\ ({\em NonDeducibility on Composition}) and its bisimulation version \BNDC\ \cite{FG95,FG01}, hence we envisage implicitly arbitrary concurrent contexts defined by Petri nets with high-level transitions. Now, the results presented in \cite{Dam} concern language based security whereas our results concern discrete event systems security.
As a matter of fact, both settings do not compare: on the one hand,
owing to the impossibility of testing places for zero, unbounded
Place/Transition nets have less computing power than Parallel
Write Programs, but on the other hand they have {\em labeled}
transition semantics whereas Parallel Write Programs have
{\em unlabeled} transition semantics.

Let us now explain the meaning of non-interference in the context of 
systems and control. In the Ramadge and Wonham approach to supervisory 
control for safety properties of discrete event systems 
\cite{RW87,RW89}, one considers closed loop systems made of a plant (the 
system under control) and a controller that may share actions but have 
disjoint sets of local states. Synchronization on shared actions allows 
the controller to observe the plant and to disable selected actions of 
the plant. Actions of the plant may be invisible to the controller, but 
all actions of the controller are shared with the plant and 
synchronized. Moreover controllers are deterministic, hence the current 
state of the controller may be inferred from the past behaviour of the 
plant. In the present paper, the closed system made of the plant and the 
controller is modelled by an unbounded Petri net with three levels of 
transitions $L$, $D$ and $H$. A place may count e.g. an unbounded number 
of clients or goods. Transitions in $L$ represent actions of the plant 
alone. Transitions in $D$ represent synchronized actions of the plant 
and the controller. Transitions in $H$ represent actions of the 
controller alone. Here the controller can check and modify proactively 
the global state to orient runs towards reaching some set of states or 
to maximize some profit. Intransitive non-interference means the 
impossibility for the controlled system, seen as the adversary of the 
controller, to win by forcing or dodging synchronized actions that 
depend upon the decisions taken by the controller after the last 
synchronized action. An example is given in Section~\ref{intransitive}.
    
We are mainly interested in intransitive non-interference. Nevertheless, in a large part of the paper, we shall focus on classical non-interference, in order to establish first the technical results in a simpler framework. In Section \ref{back} we recall the basics of labeled transition systems and Petri nets. Section \ref{noninterference} presents the definitions of classical non-interference notions for PT-nets, and proves that both language equivalence and weak bisimulation equivalence based notions of classical non-interference are decidable. Section \ref{intransitive} presents the definition of intransitive non-interference for PT-nets, introduces examples showing the practical significance of this notion in the context of discrete event systems, and provides decidability results extending the results of Section \ref{noninterference}. Section \ref{conc} reports some conclusive remarks. A short appendix recalls some results on Petri nets and semi-linear sets used in our proofs.

\section{Background}\label{back}

\subsection{Transition systems and bisimulations}

\begin{definition}[LTS]
A {\em labeled transition system} over a set
of {\em labels} $\Sigma$ is a tuple 
$\mathcal{T}=(Q,T,q_0)$ where $Q$ is a set of 
{\em states}, $q_0\in Q$ is the initial state, 
and $T\subseteq Q\times\Sigma\times Q$ is a set 
of {\em labeled transitions}. An LTS is said 
to be {\em deterministic} if $(q,\sigma,q')\in T$ 
and $(q,\sigma,q'')\in T$ entail $q'=q''$. 
\end{definition}

\begin{definition}[LTS under partial observation]
A partially observed LTS is an LTS 
$\mathcal{T}=(Q,T,q_0)$ over a set of labels 
$\Sigma$ which is partitioned into 
{\em observable} labels $\sigma\in\Sigma_o$ 
(for convenience, we assume that
$\varepsilon\notin\Sigma_o$) and 
{\em unobservable} labels $\tau\in\Sigma_{uo}$.
In a partially observed LTS,
$q\rightarrow^*q'$ denotes the least binary 
relation on states such that $q\rightarrow^*q$ for 
all $q\in Q$, $q\rightarrow^*q'$ for all 
$(q,\tau,q')\in T$ with $\tau\in\Sigma_{uo}$, and 
$q\rightarrow^*q'$ whenever $q\rightarrow^*q''$ 
and $q''\rightarrow^*q'$ for some $q''$. 
\end{definition}

\begin{definition}[Language equivalence]\label{LE1}
The {\em language} of a partially observed
LTS is the set of all finite words 
$\sigma_1\sigma_2\ldots\sigma_n$ (including 
$\varepsilon$ which corresponds to $n=0$) such 
that 
$q_0\rightarrow^*\sta{q_1}{\sigma_1}{q'_1}
\rightarrow^*\sta{q_2}{\sigma_2}{q'_2}\ldots
\rightarrow^*\sta{q_n}{\sigma_n}{q'_n}$ for some
adequate sequence of states $q_i$ and $q'_i$.
Two partially observed LTS's $\mathcal{T}$ and
$\mathcal{T}'$ are {\em language equivalent}
(in notation, $\mathcal{T}\sim\mathcal{T'}$) 
if they have the same language.  
\end{definition}

\begin{definition}[Weak simulation]\label{wsim.def}
Given a set of labels $\Sigma=\Sigma_o\cup\Sigma_{uo}$
and two partially observed LTS's $\mathcal{T}$ and 
$\mathcal{T}'$ over $\Sigma$, $\mathcal{T}$ is
{\em weakly simulated} by $\mathcal{T}'$ (or $\mathcal{T}'$
weakly simulates $\mathcal{T}$) if there exists a binary 
relation $R\subseteq Q\times Q'$, called a {\em weak simulation},
such that $(q_0,q'_0)\in R$ and the 
following requirements are satisfied 
for all $(q_1,q'_1)\in R$, and for all 
$\sigma\in\Sigma_o$ and $\tau\in\Sigma_{uo}$:
\begin{itemize}
\item
if $\sta{q_1}{\sigma}{q_2}$ then $(\exists q'_2):\,
(q_2,q'_2)\in R$ and $q'_1\rightarrow^*
\sta{q''_1}{\sigma}{}
q''_2\rightarrow^* q'_2$, 
\item
if $(q_1,\tau,q_2)\in T$ then $(\exists q'_2):\,
(q_2,q'_2)\in R$ and
$q'_1\rightarrow^* q'_2$.
\end{itemize}
\end{definition}

If $\mathcal{T}$ is simulated by $\mathcal{T}'$, 
then the language of $\mathcal{T}$ is included in 
the language of $\mathcal{T}'$. 

\begin{definition}[Weak bisimilarity]\label{WB1}
Given a set of labels $\Sigma=\Sigma_o\cup\Sigma_{uo}$,
two partially observed LTS's $\mathcal{T}=(Q,T,q_0)$ and 
$\mathcal{T}'=(Q',T',q'_0)$ over $\Sigma$ are {\em weakly 
bisimilar} (in notation, $\mathcal{T}\approx\mathcal{T'}$) 
if and only if there exists some binary relation 
$R\subseteq Q\times Q'$, called a {\em weak bisimulation},
such that $(q_0,q'_0)\in R$ and both $R$ and $R^{-1}$ are
weak simulations.
\end{definition}

If $\mathcal{T}$ 
and $\mathcal{T}'$ are weakly bisimilar, then they
are language equivalent. 


\subsection{Place/Transition Petri nets}

In order to keep the presentation concise, we omit here 
the basic definition of Petri nets which may be found 
in an appendix together with some classical decidability 
results.

\begin{definition}[PT-net system]
A {\em PT-net system} $\mathcal{N}=(P,T,F,M_0)$ is a 
PT-net with an {\em initial marking} $M_0$. The {\em
reachability set} $RS(\mathcal{N})$ of $\mathcal{N}$ 
is the set of all markings that may be reached from
$M_0$ by sequences of transitions of the net. The 
{\em reachability graph} $RG(\mathcal{N})$ of $\mathcal{N}$ 
is the LTS with the set of states $[M_0\rangle$ and
the initial state $M_0$, where $[M_0\rangle=
RS(\mathcal{N})$ and there is a 
transition from $M$ to $M'$ labeled with $t$ if{}f 
$M[t\rangle M'$. Given $\mathcal{N}=(P,T,F,M_0)$, 
the {\em underlying net} is $\mathcal{U}(\mathcal{N})
=(P,T,F)$. For convenience, we write
$\mathcal{N}=(\mathcal{U}(\mathcal{N}),M_0)$. 
\end{definition}

\begin{definition}[Composition of net systems]\label{netcom.def}
Given two PT-net systems
$\mathcal{N}_1=(P_1,T_1,F_1,M_{1,0})$ and 
$\mathcal{N}_2=(P_2,T_2,F_2,M_{2,0})$ such that
$P_1 \cap P_2 = \emptyset$, their 
composition $\mathcal{N}_1\,|\,\mathcal{N}_2$
is the  PT-net system 
$(P,T,F,M_{0})$ where $P$ is the 
union of $P_1$ and $P_2$, $T$ is the union of $T_1$ 
and $T_2$, and $F$ and $M_0$ are the unions of the
maps $F_i$ and $M_{i,0}$ respectively, for $i=1,2$.
Also let 
$\mathcal{U}(\mathcal{N}_1)\,|\,
\mathcal{U}(\mathcal{N}_2)=$
$\mathcal{U}(\mathcal{N}_1\,|\,\mathcal{N}_2)$. 
\end{definition}

Note that synchronisation occurs over those transitions that are 
shared by the two nets, that is, for a transition $t$ that occurs both in $T_1$ and $T_2$,
we have that, e.g.,  $F(p,t) = F_1(p, t)$ if $p \in P_1$, $F(p,t) = F_2(p, t)$ otherwise.

\begin{definition}[Restriction of a net system]
Given a PT-net system $\mathcal{N}=(P,T,F,M_{0})$
and a subset of transitions $T'\subseteq T$, let
$\mathcal{N}\setminus T'=(P,T\setminus T',F',M_{0})$ 
where $F'$ is the induced restriction of $F$ on 
$T\setminus T'$. Also let   
$\mathcal{U}(\mathcal{N})\setminus T'=
(P,T\setminus T',F')$.
\end{definition}

\begin{definition}[Labeled net system]
A {\em labeled net system} $(\mathcal{N},\lambda)$
is a PT-net system $\mathcal{N}=(P,T,F,M_{0})$ with
a {\em transition labelling map} 
$\lambda:T\rightarrow\Sigma_o\cup\{\varepsilon\}$
(the subscript $o$ in $\Sigma_o$ means an alphabet 
of observations).
 The {\em labeled reachability graph} of
$(\mathcal{N},\lambda)$ is the partially observed
LTS over $\Sigma=\Sigma_o\cup\{\varepsilon\}$ 
which derives from $RG(\mathcal{N})$ by replacing each
transition $M[t\rangle M'$ with a corresponding 
transition $(M,\lambda(t),M')$. 
\end{definition}

\begin{definition}[Weak simulation]
Given two labeled net systems $(\mathcal{N},\lambda)$ 
and $(\mathcal{N}',\lambda')$ over the same set of 
labels $\Sigma_o$, $(\mathcal{N},\lambda)$ is weakly
simulated by $(\mathcal{N}',\lambda')$ if the labeled
reachability graph of $\mathcal{N}$ is weakly simulated 
by the labeled reachability graph of $\mathcal{N}'$.
\end{definition}

\begin{definition}[Equivalences of labeled net systems]
Two labeled net systems $(\mathcal{N},\lambda)$ 
and $(\mathcal{N}',\lambda')$ over the same set 
of labels $\Sigma_o$ are:
\begin{itemize} 
\item
{\em language equivalent}
(in notation, $(\mathcal{N},\lambda)\sim
(\mathcal{N}',\lambda')$ or for short $\mathcal{N}\sim
\mathcal{N}'$ when the labelling maps are clear from the
context)
if their labeled reachability graphs are
language equivalent;
\item
{\em weakly bisimilar}
(in notation, $(\mathcal{N},\lambda)\approx
(\mathcal{N}',\lambda')$ or for short $\mathcal{N}\approx
\mathcal{N}'$ when the labelling maps are clear from the
context)
if their labeled reachability graphs are
weakly bisimilar.
\end{itemize}
A weak bisimulation between the labeled 
reachability graphs of two labeled net systems
is called a weak bisimulation between them.
\end{definition}

A particular case is with {\em partially observed net systems}, i.e. 
when $\Sigma_o=T_o\subseteq T$, $\lambda(t)=t$ for $t\in T_o$,
and $\lambda(t)=\varepsilon$ for $t\in T\setminus T_o$. 
For partially observed net systems, 
$(\mathcal{N},\lambda)\sim(\mathcal{N}',\lambda')$
if and only if the reachability graphs of $\mathcal{N}$ and $\mathcal{N}'$, 
considered as partially observed LTS's with $\Sigma_{uo}=T\setminus T_o$,
are language equivalent  in
the sense of Definition~\ref{LE1}.
In the same conditions,
$(\mathcal{N},\lambda)\approx (\mathcal{N}',\lambda')$ if and only if 
$RG(\mathcal{N})\approx RG(\mathcal{N}')$ in the sense of Definition~\ref{WB1}.

\begin{proposition}
\label{sim=approx}
If $\lambda$ is the identity, $(\mathcal{N},\lambda) \approx (\mathcal{N}',\lambda)$
iff $(\mathcal{N},\lambda) \sim (\mathcal{N}',\lambda)$
\end{proposition}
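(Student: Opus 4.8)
The plan is to reduce the statement to the classical fact that, for \emph{deterministic} labelled transition systems, language equivalence and bisimilarity coincide. Two observations make this reduction possible when $\lambda$ is the identity (so that $T=T'=\Sigma_o$): first, no transition carries the silent label $\varepsilon$, so the unobservable closure $\rightarrow^*$ degenerates to the identity relation on markings; second, the labelled reachability graph is deterministic in the sense of the LTS definition.

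For the first point, since $\lambda(t)=t$ for every transition and transition names are distinct from $\varepsilon$, every arc of the labelled reachability graph is observable. Hence in the inductive clause defining $\rightarrow^*$ there are no pairs $(q,\tau,q')$ with $\tau$ unobservable, and $\rightarrow^*$ reduces to $\{(M,M)\mid M\in RS(\mathcal{N})\}$. Consequently the weak transition $q'_1\rightarrow^*\sta{q''_1}{\sigma}{}q''_2\rightarrow^*q'_2$ of Definition~\ref{wsim.def} collapses to the strong transition $\sta{q'_1}{\sigma}{q'_2}$, so weak bisimilarity becomes strong bisimilarity and the language of Definition~\ref{LE1} becomes the ordinary set of firing sequences. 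For the second point, in a PT-net the successor in $M[t\rangle M'$ is uniquely determined by $M$ and $t$ (remove the input multiplicities of $t$, add its output multiplicities); thus $(M,t,M_1)$ and $(M,t,M_2)$ force $M_1=M_2$, and both reachability graphs are deterministic.

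One direction is then free: the remark following Definition~\ref{WB1} already gives that bisimilarity implies language equivalence, so $\approx\ \Rightarrow\ \sim$ needs no further work. For the converse I would build a bisimulation explicitly. Writing $L(M)$ for the firing-sequence language of the net started at $M$, define $R$ to relate reachable markings $M$ and $M'$ exactly when $L(M)=L(M')$. Then $(M_0,M'_0)\in R$ because $\mathcal{N}\sim\mathcal{N}'$. To verify the simulation clauses, take $(M,M')\in R$ and $\sta{M}{t}{M_1}$; since $t\in L(M)=L(M')$, determinism yields a unique $M'_1$ with $\sta{M'}{t}{M'_1}$, and the one-step residual of $L(M)$ after $t$ equals $L(M_1)$ while that of $L(M')$ after $t$ equals $L(M'_1)$, so $L(M_1)=L(M'_1)$ and $(M_1,M'_1)\in R$. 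The symmetric argument handles $R^{-1}$, so $R$ is a bisimulation and $\mathcal{N}\approx\mathcal{N}'$.

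The only genuinely delicate step is the residual-language computation in the last part, and it is exactly where determinism is indispensable: determinism guarantees that the one-step residual of a deterministic system's language coincides with the language of the \emph{unique} $t$-successor, so equal languages are preserved under taking successors. This is the reason $\sim\ \Rightarrow\ \approx$ goes through here even though it fails for nondeterministic systems; everything else is routine bookkeeping.
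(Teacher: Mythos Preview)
The paper states Proposition~\ref{sim=approx} without proof, so there is no argument in the paper to compare against. Your proof is correct and is exactly the standard justification one would expect: with $\lambda$ the identity there are no $\varepsilon$-labelled arcs, so $\rightarrow^*$ collapses to equality and weak bisimilarity becomes strong bisimilarity; the labelled reachability graph of a PT-net is deterministic because $M[t\rangle M'$ fixes $M'$; and for deterministic LTS's without silent moves, language equivalence and bisimilarity coincide. Your explicit bisimulation $R=\{(M,M')\mid L(M)=L(M')\}$ together with the residual argument is the textbook way to establish this folklore fact, and the paper clearly relies on precisely this reasoning when it invokes the proposition (e.g.\ in the remark after Definition~\ref{sbndc} and in Proposition~\ref{jhkjb}).
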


\section{Classical non-interference in PT-nets}\label{noninterference}

In this section, we focus on systems that can perform two kinds of actions: high-level actions, representing the interaction of the system with high-level users, and low-level actions, representing the interaction of the system with low-level users. The system has the property of non-interference if the interplay between its low-level part and high-level part cannot affect the low level user's view of the system, even assuming that the low-level user knows the structure of the system. As already said in the introduction, the goal of this section is to provide the technical basis that we need for showing subsequently the decidability of intransitive non-interference for PT-nets, which we feel has more direct interest for applications in the context of discrete event systems. We must therefore postpone the presentation of motivating examples.    

\begin{definition}[Two-level net system]
A two-level PT-net system is a PT-net system 
$\mathcal{N}=(P,T,F,M_0)$ whose set of transitions  
$T$ is partitioned into {\em low level} 
transitions $l\in L$ and {\em high level} 
transitions $h\in H$, such that $T=L\cup H$ and 
$L\cap H=\emptyset$. A net system $\mathcal{N}$ 
is a {\em high-level net system} if all transitions 
in $T$ are high-level transitions. It is a 
{\em low-level net system} if all transitions in $T$ 
are low-level transitions.
\end{definition} 

Henceforth, {\em two-level net systems are considered 
as partially observed net systems} where the transitions 
in $L$ are observable while the transitions in $H$ are 
unobservable ($\Sigma_o=L$ and $\Sigma_{uo}=H$). This 
interpretation applies to all instances of the relations  
$\mathcal{N}\sim \mathcal{N}'$ or 
$\mathcal{N}\approx \mathcal{N}'$ between two-level 
net systems. We denote by $\mathcal{L}(\mathcal{N})$ the
language of a two-level net system $\mathcal{N}$, that 
is to say, the set of images $\lambda(t_1t_2\ldots t_n)$ 
of sequences of transitions $M_0[t_1t_2\ldots t_n\rangle
M$ under the labelling map $\lambda(t)=t$ for $t\in L$ 
and $\lambda(t)=\varepsilon$ for $t\in H$.  

\begin{definition}[NDC-BNDC]\label{bndc} 
A {\em two-level} net system $\mathcal{N}$ has the property 
NDC (Non-Deducibility on Compositions),
resp. BNDC 
(Bisimulation-Based Non-Deducibility on Compositions), 
if for any high-level net system 
$\mathcal{N'}$ with a set of transitions $H'$ not 
intersecting $L$, the two-level net systems 
$\mathcal{N}\setminus H$ and 
$(\mathcal{N}|\,\mathcal{N'})\setminus(H\setminus H')$
are language equivalent, resp. weakly bisimilar. 
\end{definition}

The definitions of NDC and BNDC are very strong, and 
their verification is indeed quite demanding: infinitely many 
equivalence checks are required, one for each choice
of a high-level net system $\mathcal{N'}$. Moreover, each
equivalence check may be a problem, as both language 
equivalence and bisimulation equivalence are undecidable 
over unbounded labeled PT-nets and likewise over 
unbounded partially observed 
PT-nets \cite{Hack,Jancar}.
We shall discuss about the strength of
these notions in section~\ref{intransitive}. For the 
moment, what we need is an alternative characterization 
of these properties, more amenable for an algorithmic 
treatment in view of showing decidability.

\subsection{Deciding on NDC}

In this section, we show that $\mathcal{N}$ enjoys NDC 
if and only if $\mathcal{N}$ and $\mathcal{N}\setminus H$
are language equivalent.

\begin{proposition}\label{simndc}
For any high-level net system $\mathcal{N'}$ with set
of transitions $H'$ not intersecting $L$, $\mathcal{N}\setminus H$ is 
weakly simulated by 
$(\mathcal{N}|\,\mathcal{N'})\setminus(H\setminus H')$
which in turn is weakly simulated by $\mathcal{N}$ 
(where all net systems under consideration have the same set of
observable transitions $\Sigma_o=L$).
\end{proposition}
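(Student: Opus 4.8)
The plan is to exhibit the two weak simulations explicitly, exploiting the disjointness of places and a careful accounting of which transitions survive the restriction. First I would pin down the transition set of the middle system. The composition $\mathcal{N}\,|\,\mathcal{N}'$ has transitions $L\cup H\cup H'$, and removing $H\setminus H'$ leaves exactly $L\cup H'$, since $L\cap H'=\emptyset$ and every element of $H\cap H'$ lies in $H'$ and is therefore retained. Thus the middle system has observable transitions $L$ and unobservable transitions $H'$. Because the composition requires $P\cap P'=\emptyset$, each of its markings decomposes uniquely as $M\uplus M'$, with $M$ a marking of $P$ and $M'$ a marking of $P'$; write $M'_0$ for the initial marking of $\mathcal{N}'$. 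The two facts driving everything are: (i) a low transition $l\in L$ does not occur in $\mathcal{N}'$ (as $L\cap H'=\emptyset$), so that its enabling and effect concern $P$ only; and (ii) a surviving high transition $h\in H'$ is either \emph{shared}, $h\in H\cap H'$, acting on both $P$ and $P'$ but with its $P$-part governed by the same flow $F$ as in $\mathcal{N}$, or \emph{private} to $\mathcal{N}'$, $h\in H'\setminus H$, acting on $P'$ only.

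For the first simulation I would keep the $\mathcal{N}'$-component frozen at $M'_0$. Define $R_1=\{(M,\,M\uplus M'_0)\mid M\in RS(\mathcal{N}\setminus H)\}$; the initial pair $(M_0,M_0\uplus M'_0)$ lies in $R_1$. The only moves of $\mathcal{N}\setminus H$ are low transitions, hence observable, so the $\tau$-clause of Definition~\ref{wsim.def} is vacuous. If $\sta{M}{l}{M^+}$ in $\mathcal{N}\setminus H$ then, by (i), $l$ is enabled at $M\uplus M'_0$ in the middle system and fires to $M^+\uplus M'_0$, matching the step directly with no surrounding $\varepsilon$-moves; and $(M^+,M^+\uplus M'_0)\in R_1$. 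This establishes the left-hand simulation.

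For the second simulation I would instead project away the $\mathcal{N}'$-component. Define $R_2=\{(M\uplus M',\,M)\mid M\uplus M'\in RS((\mathcal{N}\,|\,\mathcal{N}')\setminus(H\setminus H'))\}$, whose initial pair is $(M_0\uplus M'_0,M_0)$. A move of the middle system is matched in $\mathcal{N}$ by case analysis: a low step $\sta{M\uplus M'}{l}{M^+\uplus M'}$ is matched by $\sta{M}{l}{M^+}$, using (i) to see that the enabling and effect of $l$ agree on $P$; a shared high step $\sta{M\uplus M'}{h}{M^+\uplus M''}$ with $h\in H\cap H'$ is matched by the single $\varepsilon$-labelled step $\sta{M}{h}{M^+}$ of $\mathcal{N}$, which is legitimate because enabling in the composition is stronger than enabling of the $P$-preconditions alone; and a private high step with $h\in H'\setminus H$ leaves the $P$-part unchanged and is matched by the empty computation $M\rightarrow^* M$. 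In every case the target pair again lies in $R_2$, and by induction the $P$-projection of each reachable marking of the middle system is reachable in $\mathcal{N}$, so $R_2$ is well defined.

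The routine part is the firing-rule bookkeeping. The one point demanding care — and the place where the hypothesis $L\cap H'=\emptyset$ and the disjointness $P\cap P'=\emptyset$ are genuinely used — is the classification (i)--(ii) together with the observation that the firing rule projects monotonically: enabling of a transition at $M\uplus M'$ in the composition implies enabling of its $P$-restriction at $M$ in $\mathcal{N}$, and conversely for transitions that do not touch $P'$. Getting the surviving-transition set $L\cup H'$ exactly right is the only other thing that could go wrong, and it is the hinge of the whole argument.
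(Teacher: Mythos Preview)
Your argument is correct and follows essentially the same structural idea as the paper's proof, namely that low transitions see only the places of $\mathcal{N}$ while the additional transitions coming from $\mathcal{N}'$ can be projected away or ignored; you have simply made the simulation relations $R_1$ and $R_2$ explicit and carried out the case analysis the paper leaves implicit. In fact your treatment of the case $h\in H\cap H'$ is more careful than the paper's one-line sketch, which asserts that transitions from $L$ and $H'$ have disjoint place neighbourhoods in the middle system---a statement that is literally false for shared transitions but whose intended content (that the $P$-projection of any step of the middle system is a step of $\mathcal{N}$) you have supplied correctly.
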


\begin{proof}
Any transition from $L$ has similar place neighbourhoods in
$\mathcal{N}\setminus H$, 
$(\mathcal{N}|\,\mathcal{N'})\setminus(H\setminus H')$ and
$\mathcal{N}$, and the transitions from $L$ and $H'$ have
disjoint place neighbourhoods in 
$(\mathcal{N}|\,\mathcal{N'})\setminus(H\setminus H')$. 
\end{proof}

\begin{proposition}\label{pr-ndc}
$\mathcal{N}$ has the property NDC iff $\mathcal{N}\sim
\mathcal{N}\setminus H$. Moreover, this property can be decided.
\end{proposition}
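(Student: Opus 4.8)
The plan is to reduce the universally quantified property \NDC{} to the single equivalence $\mathcal{N}\sim\mathcal{N}\setminus H$, and then to decide that equivalence. For the \emph{if} direction I would use a squeezing argument based on Proposition~\ref{simndc}. Since weak simulation entails language inclusion, that proposition yields, for every high-level net system $\mathcal{N}'$ with $H'\cap L=\emptyset$, the chain $\mathcal{L}(\mathcal{N}\setminus H)\subseteq\mathcal{L}((\mathcal{N}\,|\,\mathcal{N}')\setminus(H\setminus H'))\subseteq\mathcal{L}(\mathcal{N})$. Under the hypothesis $\mathcal{N}\sim\mathcal{N}\setminus H$ the two outer languages coincide, so all three are equal; in particular $\mathcal{N}\setminus H\sim(\mathcal{N}\,|\,\mathcal{N}')\setminus(H\setminus H')$, and as $\mathcal{N}'$ was arbitrary, \NDC{} holds.

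For the \emph{only if} direction I would instantiate the quantifier in the definition of \NDC{} with one cleverly chosen context. I take $\mathcal{N}'$ to be a high-level net whose transition set is exactly $H$ and which constrains these transitions in no way (e.g.\ an empty set of places, or one marked self-loop place per high transition). Then $H\setminus H'=\emptyset$, the restriction removes nothing, and the composition reproduces $\mathcal{N}$ itself, so $(\mathcal{N}\,|\,\mathcal{N}')\setminus(H\setminus H')$ equals (or is language equivalent to) $\mathcal{N}$. Applying the \NDC{} hypothesis to this particular $\mathcal{N}'$ then reads $\mathcal{N}\setminus H\sim\mathcal{N}$, as required.

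It remains to decide $\mathcal{N}\sim\mathcal{N}\setminus H$, i.e.\ $\mathcal{L}(\mathcal{N})=\mathcal{L}(\mathcal{N}\setminus H)$. One inclusion is free: $\mathcal{N}\setminus H$ is a subnet of $\mathcal{N}$ (equivalently, apply Proposition~\ref{simndc}), so $\mathcal{L}(\mathcal{N}\setminus H)\subseteq\mathcal{L}(\mathcal{N})$ always, and the task reduces to deciding $\mathcal{L}(\mathcal{N})\subseteq\mathcal{L}(\mathcal{N}\setminus H)$. The structural fact I would exploit is that low transitions carry pairwise distinct labels ($\lambda(l)=l$), so a word $v\in L^{*}$ determines a unique transition sequence and hence, when firable low-only, a \emph{unique} marking $\delta(v)$ in $\mathcal{N}\setminus H$. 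Consequently the inclusion fails iff there is a shortest witness $v\,l$ with $v$ firable low-only, $l$ enabled in $\mathcal{N}$ after some run of low projection $v$, yet $l$ not enabled at $\delta(v)$. To detect this I would build a product net running $\mathcal{N}$ and $\mathcal{N}\setminus H$ on two disjoint copies of the places, synchronizing the copies on every low transition while letting high transitions act on the $\mathcal{N}$-copy alone. Its reachable markings are exactly the pairs $(M_{1},M_{2})$ with $M_{2}=\delta(v)$ and $M_{1}$ reachable in $\mathcal{N}$ of low projection $v$, and the witness corresponds precisely to a reachable product marking enabling some $l$ in the first copy but not in the second.

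The main obstacle is that ``$l$ disabled in the second copy'' is not an upward-closed coverability condition; it is a bounded, zero-test-like constraint $M_{2}(q)<F(q,l)$ on a pre-place $q$ of $l$, so coverability alone does not suffice and standard Petri nets cannot test a place for being under-filled. I would resolve this by observing that, for each of the finitely many $l\in L$, the bad set $\{(M_{1},M_{2}): M_{1}(p)\ge F(p,l)\text{ for all copy-}1\text{ places }p,\ \text{and } M_{2}(q)<F(q,l)\text{ for some copy-}2\text{ place }q\}$ is Presburger-definable, hence semilinear, and that reachability of a semilinear target set is decidable: each linear component reduces to finitely many ordinary reachability queries (reach the base vector after adjoining terminal transitions that subtract the period vectors), and Petri net reachability is decidable. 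Deciding whether any such bad marking is reachable in the product net therefore decides $\mathcal{L}(\mathcal{N})\subseteq\mathcal{L}(\mathcal{N}\setminus H)$, and with it \NDC{}.
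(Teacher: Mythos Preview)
Your characterization argument (the two directions of the ``iff'') is essentially the paper's: the same squeezing via Proposition~\ref{simndc} for the \emph{if} direction, and the same instantiation $H'=H$ with no places for the \emph{only if} direction.

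Where you diverge is in the decidability part. The paper does not build a product net; it simply observes that $\mathcal{N}\setminus H$ has injective labelling, so $\mathcal{L}(\mathcal{N}\setminus H)$ is a \emph{free} Petri net language, and then invokes Pelz's theorem (Theorem~\ref{Pelz} in the appendix): the complement of a free Petri net language is itself the language of a labeled net with a semilinear acceptance condition, and therefore the inclusion $\mathcal{L}(\mathcal{N})\subseteq\mathcal{L}(\mathcal{N}\setminus H)$ reduces to one semilinear reachability query. Your approach is to bypass Pelz and exploit the determinism of $\mathcal{N}\setminus H$ directly, synchronizing the two nets on low transitions and testing for a reachable pair $(M_{1},M_{2})$ where some $l$ is enabled in the first component but not in the second. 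This is correct: the shortest-counterexample argument guarantees that the relevant pair is indeed reachable in the synchronized product (every proper low prefix of the witness is firable in $\mathcal{N}\setminus H$, so the second copy never blocks the synchronization), and the bad set is manifestly semilinear, so Proposition~\ref{semi-linear-decidable.prop} applies. In effect you have re-derived, for this special case, what Pelz's complementation construction would produce; the paper's route is shorter by one citation, yours is more self-contained and, incidentally, uses exactly the two-copy product idea that the paper itself deploys later for deciding $P(h,l)$.
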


\begin{proof}
By definition, $\mathcal{N}$ has the property NDC iff,
for any high-level net system $\mathcal{N'}$ with a set 
of transitions $H'$ not intersecting $L$, the two-level 
net systems $\mathcal{N}\setminus H$ and
$(\mathcal{N}|\,\mathcal{N'})\setminus(H\setminus H')$
are language equivalent. 
Now, the chain of inclusion relations
$\mathcal{L}(\mathcal{N}\setminus H)\subseteq$
$\mathcal{L}((\mathcal{N}|\,\mathcal{N'})\setminus
(H\setminus H'))\subseteq$ $\mathcal{L}(\mathcal{N})$ holds for Proposition \ref{simndc}.
Both bounds are reached for some net system 
$\mathcal{N'}$; indeed, the lower bound is reached when 
$\mathcal{N'}$ has no place and $H'=\emptyset$, and the 
upper bound is reached when $\mathcal{N'}$ has no place 
and $H'=H$. Suppose $\mathcal{N}$ has the property NDC,
then $\mathcal{L}(\mathcal{N}\setminus H)=$
$\mathcal{L}((\mathcal{N}|\,\mathcal{N'})\setminus
(H\setminus H'))=$ $\mathcal{L}(\mathcal{N})$ for 
$\mathcal{N'}$ realizing the upper bound.
Conversely, suppose that 
$\mathcal{L}(\mathcal{N}\setminus H)=$
$\mathcal{L}(\mathcal{N})$, then necessarily
$\mathcal{L}(\mathcal{N}\setminus H)=$
$\mathcal{L}((\mathcal{N}|\,\mathcal{N'})\setminus
(H\setminus H'))$. Hence, the first claim in the proposition has been 
established. As all transitions are observable in the 
net system $\mathcal{N}\setminus H$, the language 
$\mathcal{L}(\mathcal{N}\setminus H)$ is a free Petri net 
language. By E. Pelz's theorem and corollary (Theorem~\ref{Pelz} 
in the appendix),  
one can decide whether $\mathcal{L}(\mathcal{N})\subseteq$ 
$\mathcal{L}(\mathcal{N}\setminus H)$, and hence whether 
the two languages are equal.
\end{proof}



\begin{example}\label{simple}
The net system $\mathcal{N}_1$ of Figure \ref{simpleHL}(a) 
is insecure, as $\mathcal{N}_1$ can perform the low transition $l$ at some stage, while $\mathcal{N}_1\setminus H$ cannot. On the contrary,  the net system $\mathcal{N}_2$ in 
Figure \ref{simpleHL}(b) enjoys NDC.
\begin{figure}[htbp]
\begin{center}
\begin{tikzpicture}[scale=0.7]
\node[circle,draw,minimum size=0.5cm](s1)at(0,1){};\filldraw[black](0,1)circle(3pt);
\node[draw,minimum size=0.5cm](ha)at(1.5,1){$h$};
\node[circle,draw,minimum size=0.5cm](s)at(3,1){};\draw(3,0)node{$s$};
\node[draw,minimum size=0.5cm](la)at(4.5,1){$l$};
\draw[-latex](s1.east)--(ha.west);
\draw[-latex](ha.east)--(s.west);
\draw[-latex](s.east)--(la.west);
\draw(-2,1)node{$(a):$};
\node[circle,draw,minimum size=0.5cm](s2)at(10,1){};\filldraw[black](10,1)circle(3pt);
\node[draw,minimum size=0.5cm](hb)at(11.5,1){$l$};
\node[circle,draw,minimum size=0.5cm](sd)at(13,1){};\draw(13,0)node{$s'$};
\node[draw,minimum size=0.5cm](lb)at(14.5,1){$h$};
\draw[-latex](s2.east)--(hb.west);
\draw[-latex](hb.east)--(sd.west);
\draw[-latex](sd.east)--(lb.west);
\draw(8,1)node{$(b):$};
\end{tikzpicture}
\end{center}
\caption{Two simple two-level net systems}\label{simpleHL}
\end{figure}
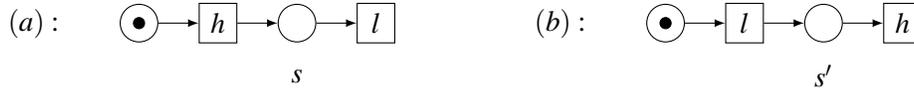
\end{example}

\begin{example}
Consider the disconnected net system $\mathcal{N}$ in Figure \ref{semicounter} (l.h.s.).
Intuitively, we expect that this system is secure
because the high part of the net (the left part) and the low part of the net (the right part) are disconnected
and so it appears that no interference is possible. In view of Definition~\ref{bndc},
it seems however difficult to verify this property by direct inspection of the infinite labeled reachability graph of $\mathcal{N}$ shown   
in Figure \ref{semicounter} (r.h.s.). With the help of Proposition \ref{pr-ndc}, this
verification becomes straightforward: the transition system that generates the
language $\mathcal{L}(\mathcal{N}\setminus H)$, which corresponds with the left column of the picture, 
and the deterministic transition system that generates the language 
$\mathcal{L}(\mathcal{N})$ (obtained by replacing all labels $h_i$ 
by $\varepsilon$ and then applying the usual subset construction)
are indeed identical. 

\begin{figure}
\begin{center}
\begin{tikzpicture}[scale=0.7]
\node[circle,draw,minimum size=0.5cm](s1)at(1,5.5){};\filldraw[black](1,5.5)circle(3pt);
\node[draw,minimum size=0.5cm](h1)at(1,4){$h_1$};
\node[circle,draw,minimum size=0.5cm](s)at(1,2.5){};
\node[draw,minimum size=0.5cm](h2)at(1,1){$h_2$};
\draw[-latex](s1.west)to[out=200,in=160](h1.west);
\draw[-latex](h1.east)to[out=20,in=-20](s1.east);
\draw[-latex](h1.south)--(s.north);
\draw[-latex](s.south)--(h2.north);
\node[circle,draw,minimum size=0.5cm](s2)at(3,5.5){};\filldraw[black](3,5.5)circle(3pt);
\node[draw,minimum size=0.5cm](l1)at(3,4){$l_1$};
\node[circle,draw,minimum size=0.5cm](sd)at(3,2.5){};
\node[draw,minimum size=0.5cm](l2)at(3,1){$l_2$};
\draw[-latex](s2.west)to[out=200,in=160](l1.west);
\draw[-latex](l1.east)to[out=20,in=-20](s2.east);
\draw[-latex](l1.south)--(sd.north);
\draw[-latex](sd.south)--(l2.north);
\end{tikzpicture}\hspace*{2.5cm}
\begin{tikzpicture}[scale=0.7]
\node[circle,fill=black!100,inner sep=0.1cm](s11)at(1,5){};
\node[circle,fill=black!100,inner sep=0.1cm](s21)at(4,5){};
\node[circle,fill=black!100,inner sep=0.1cm](s31)at(7,5)[label=right:$\cdots$]{};
\node[circle,fill=black!100,inner sep=0.1cm](s12)at(1,3){};
\node[circle,fill=black!100,inner sep=0.1cm](s22)at(4,3){};
\node[circle,fill=black!100,inner sep=0.1cm](s32)at(7,3)[label=right:$\cdots$,label=below:$\vdots$]{};
\node[circle,fill=black!100,inner sep=0.1cm](s13)at(1,1)[label=below:$\vdots$]{};
\node[circle,fill=black!100,inner sep=0.1cm](s23)at(4,1)[label=right:$\cdots$,label=below:$\vdots$]{};
\draw[-triangle 45](s11.south east)to[out=-20,in=200]node[auto]{$h_1$}(s21.south west);
\draw[-triangle 45](s21.south east)to[out=-20,in=200]node[auto]{$h_1$}(s31.south west);
\draw[-triangle 45](s21.north west)to[out=160,in=20]node[auto,swap]{$h_2$}(s11.north east);
\draw[-triangle 45](s31.north west)to[out=160,in=20]node[auto,swap]{$h_2$}(s21.north east);
\draw[-triangle 45](s12.south east)to[out=-20,in=200]node[auto]{$h_1$}(s22.south west);
\draw[-triangle 45](s22.south east)to[out=-20,in=200]node[auto]{$h_1$}(s32.south west);
\draw[-triangle 45](s22.north west)to[out=160,in=20]node[auto,swap]{$h_2$}(s12.north east);
\draw[-triangle 45](s32.north west)to[out=160,in=20]node[auto,swap]{$h_2$}(s22.north east);
\draw[-triangle 45](s13.south east)to[out=-20,in=200]node[auto]{$h_1$}(s23.south west);
\draw[-triangle 45](s23.north west)to[out=160,in=20]node[auto,swap]{$h_2$}(s13.north east);
\draw[-triangle 45](s11.south west)to[out=240,in=120]node[auto,swap]{$l_1$}(s12.north west);
\draw[-triangle 45](s12.north east)to[out=60,in=-60]node[auto]{$l_2$}(s11.south east);
\draw[-triangle 45](s21.south west)to[out=240,in=120]node[auto,swap]{$l_1$}(s22.north west);
\draw[-triangle 45](s22.north east)to[out=60,in=-60]node[auto]{$l_2$}(s21.south east);
\draw[-triangle 45](s31.south west)to[out=240,in=120]node[auto,swap]{$l_1$}(s32.north west);
\draw[-triangle 45](s32.north east)to[out=60,in=-60]node[auto]{$l_2$}(s31.south east);
\draw[-triangle 45](s12.south west)to[out=240,in=120]node[auto,swap]{$l_1$}(s13.north west);
\draw[-triangle 45](s13.north east)to[out=60,in=-60]node[auto]{$l_2$}(s12.south east);
\draw[-triangle 45](s22.south west)to[out=240,in=120]node[auto,swap]{$l_1$}(s23.north west);
\draw[-triangle 45](s23.north east)to[out=60,in=-60]node[auto]{$l_2$}(s22.south east);
\draw[-triangle 45](0.3,5.7)--(s11.north west);
\end{tikzpicture}
\end{center}
\caption{An infinite-state net system (l.h.s.) and its labeled reachability graph (r.h.s.)}\label{semicounter}
\end{figure}
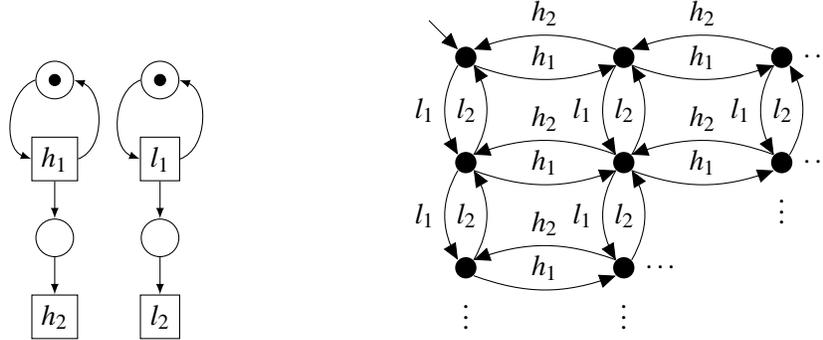
\end{example}

\subsection{Reducing BNDC to SBNDC}

For BNDC, things are a bit more complex, although we have 
the following property.

\begin{lemma}\label{lem1}
If $\mathcal{N}$ has the property BNDC, then 
$\mathcal{N}\approx\mathcal{N}\setminus H$. 
\end{lemma}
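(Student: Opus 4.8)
The plan is to instantiate the universally quantified BNDC condition (Definition~\ref{bndc}) at a single, carefully chosen high-level net system $\mathcal{N}'$ that makes the composite $(\mathcal{N}\,|\,\mathcal{N}')\setminus(H\setminus H')$ collapse to $\mathcal{N}$ itself. This is the weak-bisimulation analogue of the ``upper bound'' instance already exploited in the proof of Proposition~\ref{pr-ndc}, so I expect the argument to be short.

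Concretely, I would take $\mathcal{N}'$ to be the net system with no places and with set of transitions $H'=H$. Since $\mathcal{N}$ is two-level we have $H\cap L=\emptyset$, so $H'$ does not intersect $L$ and $\mathcal{N}'$ is a legitimate high-level net system in the sense required by Definition~\ref{bndc}. With this choice $H\setminus H'=\emptyset$, so the restriction removes nothing, and the whole claim reduces to verifying that $\mathcal{N}\,|\,\mathcal{N}'=\mathcal{N}$.

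The key verification is that composing with a placeless net sharing exactly the transitions $H$ reproduces $\mathcal{N}$ on the nose. By Definition~\ref{netcom.def} the places of $\mathcal{N}\,|\,\mathcal{N}'$ are $P\cup\emptyset=P$, the transitions are $(L\cup H)\cup H=T$, and the initial marking is $M_0$ unchanged; moreover, for every shared transition $t\in H$ and every place $p\in P$ the composition convention recorded after Definition~\ref{netcom.def} returns the flow of $\mathcal{N}$, precisely because $\mathcal{N}'$ contributes no places. Hence $\mathcal{N}\,|\,\mathcal{N}'$ and $\mathcal{N}$ coincide as labelled two-level net systems, and therefore $(\mathcal{N}\,|\,\mathcal{N}')\setminus(H\setminus H')=\mathcal{N}$. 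The BNDC hypothesis then yields $\mathcal{N}\setminus H\approx\mathcal{N}$, and since weak bisimilarity is symmetric we conclude $\mathcal{N}\approx\mathcal{N}\setminus H$.

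There is no serious obstacle here: the only points requiring care are bookkeeping, namely that the synchronising transitions $H$ retain exactly their $\mathcal{N}$-neighbourhoods under the composition convention (so that no spurious behaviour is introduced), and that the observability partition $\Sigma_o=L$, $\Sigma_{uo}=H$ is preserved, so that the weak bisimulation supplied by BNDC is genuinely a weak bisimulation between $\mathcal{N}\setminus H$ and $\mathcal{N}$ in the intended partially observed sense. I would also note in passing that this lemma establishes only the easy, necessary direction; the converse is what the remainder of the section must address through \SBNDC.
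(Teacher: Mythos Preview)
Your proof is correct and follows essentially the same route as the paper: both choose $\mathcal{N}'$ to be the placeless high-level net with $H'=H$, observe that $(\mathcal{N}\,|\,\mathcal{N}')\setminus(H\setminus H')$ coincides with $\mathcal{N}$ (the paper phrases this as an isomorphism of reachability graphs, you as literal equality of net systems, which is fine given Definition~\ref{netcom.def}), and then invoke BNDC together with the fact that $\approx$ is an equivalence.
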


\begin{proof}
Let $\mathcal{N'}$ be the high-level net system with no 
place and with the set of transitions $H'=H$, then the
reachability graphs of $\mathcal{N}$ and
$(\mathcal{N}|\,\mathcal{N'})\setminus(H\setminus H')$
are isomorphic, hence they are weakly bisimilar, that 
is $\mathcal{N}\approx(\mathcal{N}|\,\mathcal{N'})
\setminus(H\setminus H')$. If $\mathcal{N}$
has the property BNDC, then 
$\mathcal{N}\setminus H\approx 
(\mathcal{N}|\,\mathcal{N'})\setminus(H\setminus H')$,
and the lemma follows since $\approx$ is an equivalence. 
\end{proof}

\begin{example}\label{conf}
Consider the net system $\mathcal{N}$ in Figure \ref{conflictnet}. $\mathcal{N}$ is NDC because 
$\mathcal{N} \sim \mathcal{N}\setminus H$. However, $\mathcal{N}$ is not BNDC 
because $\mathcal{N} \not\approx \mathcal{N}\setminus H$. Indeed, this net is insecure: a low-level user who is unable to perform
transition $l$ can deduce from this failure that the high-level transition $h$ has been performed.
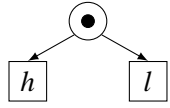
\begin{figure}
\begin{center}
\begin{tikzpicture}[scale=0.8]
\node[circle,draw,minimum size=0.5cm](s)at(2,2){};\filldraw[black](2,2)circle(3pt);
\node[draw,minimum size=0.5cm](h)at(1,1){$h$};
\node[draw,minimum size=0.5cm](l)at(3,1){$l$};
\draw[-latex](s.south west)--(h.north);
\draw[-latex](s.south east)--(l.north);
\end{tikzpicture}
\end{center}
\caption{A simple two-level net system}\label{conflictnet}
\end{figure}
\end{example}

In the rest of the section, we show that $\mathcal{N}$ enjoys 
BNDC if and only if it enjoys the property SBNDC 
defined below. 

\begin{definition}[SBNDC]\label{sbndc} 
A two-level net system $\mathcal{N}$ has the
property SBNDC (Bisimulation-Based Strong Non-Deducibility on Compositions) if, 
for any reachable marking $M_1$ of $\mathcal{N} = (N, M_0)$ and 
for any high-level transition $h\in H$, 
$M_1[h\rangle M_2$ entails that 
$(N\setminus H,M_1)$ and
$(N\setminus H,M_2)$ are 
weakly bisimilar.
\end{definition}

Note that, in view of Proposition~\ref{sim=approx}, the 
relation between $M_1$ and $M_2$ required in 
Definition~\ref{sbndc} may be equivalently expressed as
$\mathcal{L}(\mathcal{N}\setminus H,M_1)=
\mathcal{L}(\mathcal{N}\setminus H,M_2)$.

\begin{definition}\label{R}
Let $R\subseteq RS(\mathcal{N}\setminus H)\times 
RS(\mathcal{N})$ be the binary relation on 
markings which is generated from the axiom 
$M_0 R\, M_0$ by the following two inference rules
where $h\in H$ and $l\in L$:
\begin{itemize}
\item
$M_1 R\, M_2$ and $M_1=M'_1$ and 
$M_2[h\rangle M'_2$ entail 
$M'_1 R\, M'_2$
\item
$M_1 R\, M_2$ and $M_1[l\rangle M'_1$ and
$M_2[l\rangle M'_2$ entail $M'_1 R\, M'_2$
\end{itemize}
\end{definition}

Paraphrasing the definition, $M R\, M'$ 
if and only if there exist $w\in L^*$ and 
$w'\in (L\cup H)^*$ such that
$M_0[w\rangle M$, 
$M_0[w'\rangle M'$, 
and $w$ is the projection of $w'$ on $L^*$.
In the specific case where $\mathcal{N}$ 
is BNDC, $R$ is a weak bisimulation between
$\mathcal{N}\setminus H$ and $\mathcal{N}$, 
and it is indeed the {\em least} weak 
bisimulation between them.

\begin{lemma}\label{RR}
Let $\mathcal{N}=(N,M_0)$ be a net system with the 
BNDC property and let $M_1$ and $M_2$ be reachable markings of 
$\mathcal{N}\setminus H$ and $\mathcal{N}$, 
respectively. If $M_1 R\, M_2$, then
$\mathcal{L}(N\setminus H,M_1)=
\mathcal{L}(N\setminus H,M_2)$.
\end{lemma}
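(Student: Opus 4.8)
The plan is to exploit the full strength of the BNDC hypothesis --- namely that $\mathcal{N}\setminus H\approx(\mathcal{N}\,|\,\mathcal{N}')\setminus(H\setminus H')$ for \emph{every} high-level net $\mathcal{N}'$ --- by choosing a specific ``budget'' controller $\mathcal{N}'$ that lets the high transitions of $\mathcal{N}$ run exactly as long as needed to reach $M_2$ and then freezes them forever. Concretely, using the paraphrase of Definition~\ref{R}, fix witnesses $w\in L^*$ and $w'\in(L\cup H)^*$ with $M_0[w\rangle M_1$ (via low transitions only), $M_0[w'\rangle M_2$ in $\mathcal{N}$, and $w$ the projection of $w'$ onto $L^*$; let $k$ be the number of high occurrences in $w'$. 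Define $\mathcal{N}'$ to have transition set $H'=H$ (so each $h\in H$ synchronizes with its namesake in $\mathcal{N}$) and a single place $c$ initially marked with $k$ tokens, where every $h\in H$ consumes one token from $c$ and produces none. Then $H\setminus H'=\emptyset$, so BNDC gives $\mathcal{N}\setminus H\approx\mathcal{N}\,|\,\mathcal{N}'$, with $L$ observable and $H$ silent.

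Next I would transport this global weak bisimilarity down to the pair $(M_1,M_2)$. In $\mathcal{N}\,|\,\mathcal{N}'$ the sequence $w'$ is firable from $(M_0,c{=}k)$: low transitions ignore $c$, and the $k$ high occurrences exhaust the budget, reaching the global marking $(M_2,c{=}0)$ whose observable trace is $w$. Fix a weak bisimulation $S$ relating the two nets with $((M_0),(M_0,c{=}k))\in S$. Matching this run on the $\mathcal{N}\setminus H$ side along $S$ produces a run with observable trace $w$ ending at a marking $S$-related to $(M_2,c{=}0)$; but $\mathcal{N}\setminus H$ has no silent steps and an injective labelling on $L$, so the only way to realize the observable word $w$ is to fire the transition sequence $w$, which reaches the \emph{unique} marking $M_1$. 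Hence $(M_1,(M_2,c{=}0))\in S$, so $M_1$ and $(M_2,c{=}0)$ are weakly bisimilar and therefore (as noted after Definition~\ref{WB1}) weakly language equivalent.

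It then remains to read off the two sides. On the left, $\mathcal{N}\setminus H$ from $M_1$ has only observable transitions, so its weak language is exactly $\mathcal{L}(N\setminus H,M_1)$. On the right, from $(M_2,c{=}0)$ the place $c$ is empty and can never be refilled (high transitions only drain it), so no high transition is ever enabled again; the component $\mathcal{N}$ evolves by low transitions alone, exactly as in $N\setminus H$ from $M_2$, giving weak language $\mathcal{L}(N\setminus H,M_2)$. Equating the two weak languages yields $\mathcal{L}(N\setminus H,M_1)=\mathcal{L}(N\setminus H,M_2)$, as required.

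I expect the main obstacle to be the design and justification of the controller $\mathcal{N}'$: it must gate \emph{all} high transitions of $\mathcal{N}$ (which forces synchronization by name and rules out a rigid ``chain'' gadget when $w'$ repeats a high transition, hence the counter/budget formulation), while guaranteeing that after firing $w'$ the high transitions are \emph{permanently} dead --- the irreversibility of draining $c$ is what makes the frozen state's language collapse to $\mathcal{L}(N\setminus H,M_2)$. The secondary delicate point is the determinacy step: pinning the $S$-partner of $(M_2,c{=}0)$ to be precisely $M_1$ relies on the absence of silent moves and the injectivity of the labelling in $\mathcal{N}\setminus H$, so that the observable word $w$ determines a unique reached marking.
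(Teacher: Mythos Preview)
Your proposal is correct and is essentially the paper's own proof: the ``budget controller'' $\mathcal{N}'$ with a single place $c$ holding $k$ tokens drained by each $h\in H$ is exactly the high-level net $\mathcal{K}$ the paper constructs, and your determinacy argument (no silent moves, injective labelling in $\mathcal{N}\setminus H$) spells out in detail what the paper compresses into the sentence ``as all transitions of $\mathcal{N}\setminus H$ are observable and $w$ is the observable projection of $w'$, $M_1$ and $M'_2$ are two weakly bisimilar markings.''
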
 

\begin{proof}
As $M_1 R\, M_2$, there exist $w\in L^*$ and 
$w'\in (L\cup H)^*$ such that $M_0[w\rangle M_1$, 
$M_0[w'\rangle M_2$, and $w$ is the projection of 
$w'$ on $L^*$. Let $k=|w'|\,-|w|$ be the difference of 
length between $w'$ and $w$. Consider the high-level net 
system $\mathcal{K}=(K,M_k)$ where $K$ is a net with 
a unique place $p_k$, the set of transitions $H$, and
flow relations $F(p_k,h)=1$ and $F(h,p_k)=0$ for every 
transition $h$, and where $M_k(p_k)=k$. Let $M'_0$ and 
$M'_2$ be the markings of 
$N'=\mathcal{U}(\mathcal{N}|\,\mathcal{K})$, extending 
$M_0$ and $M_2$, respectively, such that $M'_0(p_k)=k$ 
and $M'_2(p_k)=0$. By construction, $M'_0[w'\rangle M'_2$ 
in $\mathcal{N}|\,\mathcal{K}$. As $\mathcal{N}$ has the
property BNDC and $\mathcal{K}$ is a high-level net system,
$\mathcal{N}\setminus H\approx(\mathcal{N}|\,\mathcal{K})
\setminus(H\setminus H)=$ $\mathcal{N}|\,\mathcal{K}$.
As all transitions of $\mathcal{N}\setminus H$ are 
observable and $w$ is the observable projection of $w'$,
$M_1$ and $M'_2$ are two weakly bisimilar markings of 
$\mathcal{N}\setminus H$ and $\mathcal{N}|\,\mathcal{K}$,
hence 
$\mathcal{L}(N\setminus H,M_1)=$
$\mathcal{L}(N|\,K,M'_2)$. 
As $M'_2(p_k)=0$, no transition in $H$ can occur in any
sequence fired from $M'_2$ in $N|\,K$, and therefore
$\mathcal{L}(N|\,K,M'_2)=$
$\mathcal{L}(N\setminus H,M_2)$. 
Altogether, $\mathcal{L}(N\setminus H,M_1)=$
$\mathcal{L}(N\setminus H,M_2)$.
\end{proof}

\begin{proposition}\label{elabeth}
$\mathcal{N}=(N,M_0)$ has the property BNDC {\em iff} 
for all reachable markings $M_1$ and $M_2$ of 
$N\setminus H$ and $N$, respectively, 
$M_1 R\, M_2$ entails 
$\mathcal{L}(N\setminus H,M_1)=
\mathcal{L}(N\setminus H,M_2)$.
\end{proposition}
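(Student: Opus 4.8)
The statement is a biconditional, and its ``only if'' direction is already available: it is exactly Lemma~\ref{RR}, which asserts that when $\mathcal{N}$ is BNDC and $M_1\,R\,M_2$ then $\mathcal{L}(N\setminus H,M_1)=\mathcal{L}(N\setminus H,M_2)$. So I would dispose of the forward implication in one line by citing Lemma~\ref{RR}, and concentrate all the effort on the converse.

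For the converse, assume the language condition holds for every $R$-related pair and fix an arbitrary high-level net system $\mathcal{N'}$ with transition set $H'$ disjoint from $L$. The plan is to lift $R$ to a relation $R'$ between markings $M_1$ of $\mathcal{N}\setminus H$ and reachable markings $\widehat M$ of the composition $(\mathcal{N}\,|\,\mathcal{N'})\setminus(H\setminus H')$, putting $M_1\,R'\,\widehat M$ whenever there are $w\in L^*$ and $v\in(L\cup H')^*$ with $M_0[w\rangle M_1$ in $\mathcal{N}\setminus H$, $(M_0,M'_0)[v\rangle\widehat M$ in the composition, and $w$ the $L$-projection of $v$. The initial markings are $R'$-related via $w=v=\varepsilon$, so showing $R'$ to be a weak bisimulation yields $\mathcal{N}\setminus H\approx(\mathcal{N}\,|\,\mathcal{N'})\setminus(H\setminus H')$, and since $\mathcal{N'}$ is arbitrary this is BNDC. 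The essential bookkeeping point is that the $\mathcal{N}$-projection $M_2$ of $\widehat M$ is reachable in $\mathcal{N}$, obtained by firing the subsequence of $v$ on $L\cup(H\cap H')$ (the only transitions of the composition touching the places of $\mathcal{N}$), and that this projection forces $M_1\,R\,M_2$. Hence the hypothesis delivers $\mathcal{L}(N\setminus H,M_1)=\mathcal{L}(N\setminus H,M_2)$ at \emph{every} pair of $R'$.

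I would then verify that $R'$ and $R'^{-1}$ are weak simulations. The decisive simplification is that $\mathcal{N}\setminus H$ has no unobservable transitions, so its weak moves reduce to single observable low steps and the silent clause is vacuous on that side. An unobservable step $\widehat M[h'\rangle\widehat M'$ of the composition, with $h'\in H'$, is matched by idling: extend $v$ to $vh'$ while leaving $w$ and $M_1$ fixed, whence $M_1\,R'\,\widehat M'$. A low step is handled through the disjoint-neighbourhood fact underlying Proposition~\ref{simndc}: a transition $l\in L$ has all its adjacent places in $\mathcal{N}$, so its enabledness at $\widehat M$ depends only on $M_2$. Therefore $M_1[l\rangle$ holds iff $l\in\mathcal{L}(N\setminus H,M_1)=\mathcal{L}(N\setminus H,M_2)$ iff $M_2[l\rangle$ iff $\widehat M[l\rangle$; firing $l$ on both sides and appending it to $w$ and to $v$ keeps the resulting pair in $R'$. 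This settles both the $R'$ observable clause and the $R'^{-1}$ observable clause at once.

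The main obstacle is conceptual rather than computational: the hypothesis supplies only language equality, whereas BNDC demands weak bisimilarity, and language equality does not in general entail bisimilarity. What rescues the argument is that the equality is assumed at \emph{every} $R$-related pair and not merely at the initial markings, together with the absence of silent moves in $\mathcal{N}\setminus H$ and the locality of low transitions; these three facts jointly let me transfer single-step enabledness of low transitions in both directions along the bisimulation game, while the design of $R'$ ensures that each pair reached during the game again projects onto an $R$-pair, so the hypothesis stays applicable throughout. The points demanding care are to confirm that every marking occurring in $R'$ is genuinely reachable and that the synchronized transitions $h\in H\cap H'$ are treated correctly by the $L\cup(H\cap H')$ projection when arguing the reachability of $M_2$ and the relation $M_1\,R\,M_2$.
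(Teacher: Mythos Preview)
Your proof is correct and follows essentially the same route as the paper: dispose of the forward direction via Lemma~\ref{RR}, and for the converse define a candidate weak bisimulation between $\mathcal{N}\setminus H$ and $(\mathcal{N}\,|\,\mathcal{N'})\setminus(H\setminus H')$ that projects onto $R$, then verify the transfer clauses using the language-equality hypothesis and the fact that low transitions touch only places of $\mathcal{N}$. The only difference is cosmetic: the paper defines its relation $B$ directly by $M_1\,B\,(M_2\,|\,M'_2)\Leftrightarrow M_1\,R\,M_2$, whereas you define $R'$ constructively via witnessing sequences $w,v$ and then argue that every $R'$-pair projects to an $R$-pair; your $R'$ is therefore a subrelation of the paper's $B$, but either one works as a bisimulation and the case analysis is the same.
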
  

\begin{proof}
The direct implication has already been established. 
To show the converse implication, consider any high-level
net system $\mathcal{N'}$ with set of transitions $H'$ 
not intersecting $L$.
Let $B$ be the relation between the reachable 
markings of $\mathcal{N}\setminus H$ and
$(\mathcal{N}|\,\mathcal{N'})\setminus(H\setminus H')$
defined as follows. Let $(M_2|\,M'_2)$ denote 
the marking of $(\mathcal{N}|\,\mathcal{N'})$ that 
projects on the markings $M_2$ and $M'_2$ of 
$\mathcal{N}$ and $\mathcal{N'}$, respectively. Then, 
let $M_1 B\, (M_2|\,M'_2)$ iff $M_1 R\, M_2$.
Assume that $M_1 R\, M_2$ entails 
$\mathcal{L}(N\setminus H,M_1)=
\mathcal{L}(N\setminus H,M_2)$.
We will show that $B$ is a weak bisimulation between 
$\mathcal{N}\setminus H$ and
$(\mathcal{N}|\,\mathcal{N'})\setminus(H\setminus H')$,
entailing that $\mathcal{N}$ has the property BNDC.
As $M_1 R\, M_2$ for $M_1=M_0$ and $M_2=M_0$, the relation
$B$ holds between the initial states of the two net 
systems. 
Now consider any occurrence $M_1 B\,(M_2|\,M'_2)$ of the
relation $B$, hence $M_1 R\,M_2$ (by construction of $B$).
\begin{itemize}
\item
Let $M_1[l\rangle\widetilde{M_1}$ for $l\in L$.
As $M_1 R\,M_2$ entails 
$\mathcal{L}(N\setminus H,M_1)=
\mathcal{L}(N\setminus H,M_2)$,
necessarily, $M_2[l\rangle \widetilde{M_2}$ for some
marking $\widetilde{M_2}$, and then by definition of $R$,
$\widetilde{M_1} R\,\widetilde{M_2}$.  Thus,
$(M_2|\,M'_2)[l\rangle(\widetilde{M_2}|\,M'_2)$
with $\widetilde{M_1} B\, (\widetilde{M_2}|\,M'_2)$.
\item
Let $(M_2|\,M'_2)[l\rangle(\widetilde{M_2}|\,M'_2)$
for $l\in L$. As $M_1 R\,M_2$ entails 
$\mathcal{L}(N\setminus H,M_1)=
\mathcal{L}(N\setminus H,M_2)$, necessarily
$M_1[l\rangle\widetilde{M_1}$ for some marking 
$\widetilde{M_1}$ such that
$\widetilde{M_1} R\,\widetilde{M_2}$, hence 
$M_1 B\,(\widetilde{M_2}|\,M'_2)$ by definition 
of $B$.
\item
Let $(M_2|\,M'_2)[h\rangle(\widetilde{M_2}|\,M''_2)$ for 
$h\in H$, then certainly $M_2[h\rangle\widetilde{M_2}$
in $\mathcal{N}$. 
Suppose $M_1[h\rangle M_2$, then we have also 
$M_1 R\,\widetilde{M_2}$ by definition of $R$,
hence $M_1 B\,(\widetilde{M_2}|\,M''_2)$ by definition 
of $B$.
\end{itemize}
Summing up, $B$ is a weak bisimulation and $\mathcal{N}$ 
has the property BNDC.
\end{proof}

\begin{proposition}\label{jhkjb}
$\mathcal{N}$ has the property SBNDC iff
for any reachable marking $M_1$ of $\mathcal{N} = (N, M_0)$ and 
for any high-level transition $h\in H$, 
$M_1[h\rangle M_2$ entails that 
$\mathcal{L}(\mathcal{N}\setminus H,M_1)=
\mathcal{L}(\mathcal{N}\setminus H,M_2)$.
\end{proposition}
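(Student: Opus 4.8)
The plan is to read the result off directly from Definition~\ref{sbndc} together with Proposition~\ref{sim=approx}, exploiting the fact that the net $N\setminus H$ carries no unobservable transitions. The only non-trivial point is to recognize that on $N\setminus H$ weak bisimilarity and language equivalence coincide; once this is established, the statement of the proposition and the defining condition of SBNDC become word-for-word the same, quantified over the same data.

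First I would observe that $N\setminus H$ is obtained from $N$ by deleting every high-level transition, so all of its transitions lie in $L$. Since two-level net systems are read as partially observed net systems with $\Sigma_o=L$ and $\lambda(t)=t$ for $t\in L$, the labelling map restricted to $N\setminus H$ is the identity; in particular there are no $\varepsilon$-labelled transitions in the labeled reachability graph of $(N\setminus H,M)$ for any marking $M$. This is the key observation. Next I would apply Proposition~\ref{sim=approx} to the two net systems $(N\setminus H,M_1)$ and $(N\setminus H,M_2)$, which share the same underlying net and the same (identity) labelling and differ only in their initial markings. By that proposition, $(N\setminus H,M_1)\approx(N\setminus H,M_2)$ if and only if $(N\setminus H,M_1)\sim(N\setminus H,M_2)$, i.e.\ if and only if $\mathcal{L}(\mathcal{N}\setminus H,M_1)=\mathcal{L}(\mathcal{N}\setminus H,M_2)$.

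To conclude, I would substitute this equivalence into the defining clause of SBNDC: ``$M_1[h\rangle M_2$ entails that $(N\setminus H,M_1)$ and $(N\setminus H,M_2)$ are weakly bisimilar'' becomes ``$M_1[h\rangle M_2$ entails $\mathcal{L}(\mathcal{N}\setminus H,M_1)=\mathcal{L}(\mathcal{N}\setminus H,M_2)$'', the two being logically interchangeable under the same quantification over reachable markings $M_1$ and high-level transitions $h\in H$. Hence $\mathcal{N}$ has the property SBNDC exactly when the displayed language-equivalence condition holds.

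I do not expect any genuine obstacle. The whole content of the proposition is the observation that deleting $H$ eliminates all unobservability, which reduces weak bisimilarity to language equivalence through Proposition~\ref{sim=approx}. The argument therefore promotes to a numbered statement the remark already recorded immediately after Definition~\ref{sbndc}, and the proof is a one-line appeal to that proposition rather than an explicit construction of bisimulations or an analysis of firing sequences.
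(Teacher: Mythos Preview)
Your proposal is correct and follows exactly the paper's own argument: the paper's proof is the single sentence that since the labelling on $\mathcal{N}\setminus H$ is the identity $\lambda(l)=l$, the thesis follows from Proposition~\ref{sim=approx}. Your write-up simply unpacks this one-line appeal in more detail.
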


\begin{proof}
As for $\mathcal{N}\setminus H$ the labelling is the identity $\lambda(l) = l$, the thesis follows
by Proposition \ref{sim=approx}.
\end{proof}

\begin{theorem}\label{equiv}
$\mathcal{N}$ has the property BNDC {\em iff} it 
has the property SBNDC. 
\end{theorem}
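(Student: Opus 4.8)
The plan is to reduce both properties to purely language-theoretic conditions and then to show that those conditions coincide. By Proposition~\ref{jhkjb}, $\mathcal{N}$ has SBNDC exactly when, for every reachable marking $M_1$ of $\mathcal{N}=(N,M_0)$ and every $h\in H$, the step $M_1[h\rangle M_2$ forces $\mathcal{L}(N\setminus H,M_1)=\mathcal{L}(N\setminus H,M_2)$. By Proposition~\ref{elabeth}, $\mathcal{N}$ has BNDC exactly when, for all reachable $M_1$ of $N\setminus H$ and $M_2$ of $N$, the relation $M_1\,R\,M_2$ forces $\mathcal{L}(N\setminus H,M_1)=\mathcal{L}(N\setminus H,M_2)$. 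So it suffices to prove that these two conditions are equivalent, and I would treat the two implications separately.

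For the direction BNDC $\Rightarrow$ SBNDC, I would start from a reachable marking with $M_0[w'\rangle M_1$ and a high step $M_1[h\rangle M_2$, and observe that $w'$ and $w'h$ have the same projection $w$ on $L^*$ (since $h\in H$). The idea is to exhibit a single marking $\widehat{M}$ of $N\setminus H$ that is $R$-related to both $M_1$ and $M_2$; then two applications of Lemma~\ref{RR} give $\mathcal{L}(N\setminus H,\widehat{M})=\mathcal{L}(N\setminus H,M_1)$ and $\mathcal{L}(N\setminus H,\widehat{M})=\mathcal{L}(N\setminus H,M_2)$, hence the desired equality. To produce $\widehat{M}$ I need $w$ to be firable in $N\setminus H$, which is exactly where Lemma~\ref{lem1} is used: BNDC gives $\mathcal{N}\approx\mathcal{N}\setminus H$, hence $\mathcal{L}(\mathcal{N})=\mathcal{L}(\mathcal{N}\setminus H)$, and since $w\in\mathcal{L}(\mathcal{N})$ we get $M_0[w\rangle\widehat{M}$ in $N\setminus H$. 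The two relations $\widehat{M}\,R\,M_1$ and $\widehat{M}\,R\,M_2$ then follow immediately from the paraphrase of Definition~\ref{R}.

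For the direction SBNDC $\Rightarrow$ BNDC, I would prove the BNDC language condition by induction on the derivation of $M_1\,R\,M_2$ from Definition~\ref{R}. The axiom $M_0\,R\,M_0$ is trivial. In the high-rule case, where $M_1\,R\,M_2$ and $M_2[h\rangle M_2'$ yield $M_1\,R\,M_2'$, I would combine the induction hypothesis $\mathcal{L}(N\setminus H,M_1)=\mathcal{L}(N\setminus H,M_2)$ with the SBNDC condition applied to the reachable marking $M_2$ and the step $M_2[h\rangle M_2'$. In the low-rule case, where $M_1[l\rangle M_1'$ and $M_2[l\rangle M_2'$ yield $M_1'\,R\,M_2'$, I would use determinism of Petri-net firing: because every transition of $N\setminus H$ is observable and each $l$-step reaches a unique successor, $\mathcal{L}(N\setminus H,M_i')$ is the left quotient of $\mathcal{L}(N\setminus H,M_i)$ by $l$ for $i=1,2$, so equality of languages at $M_1,M_2$ descends to equality at $M_1',M_2'$.

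I expect the main obstacle to be the low-rule case, and in particular the precise justification that along a single $L$-step the language behaves as a left quotient; this hinges on $N\setminus H$ being deterministic and carrying only observable labels, so that the marking reached after $l$ is uniquely determined. A secondary subtlety is firability of the projection $w$ in the forward direction: it is not automatic, and must be extracted from $\mathcal{N}\approx\mathcal{N}\setminus H$ via Lemma~\ref{lem1} rather than assumed.
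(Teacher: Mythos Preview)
Your proposal is correct and follows essentially the same approach as the paper: for BNDC $\Rightarrow$ SBNDC you use Lemma~\ref{lem1} to fire the $L$-projection in $N\setminus H$, relate the resulting marking via $R$ to both $M_1$ and $M_2$, and invoke Lemma~\ref{RR} (the paper cites Proposition~\ref{elabeth}, which is the same thing in this direction); for SBNDC $\Rightarrow$ BNDC you argue by induction on the derivation of $M_1\,R\,M_2$ exactly as the paper does. Your left-quotient justification of the low-rule case is a more explicit version of what the paper compresses into ``the definition of $R$ and the injective labelling of nets.''
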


\begin{proof}
Suppose that $\mathcal{N}$ has the property BNDC. 
Then , by Lemma~\ref{lem1}, $\mathcal{N}\approx\mathcal{N}
\setminus H$, hence $\mathcal{L}(\mathcal{N})=
\mathcal{L}(\mathcal{N}\setminus H)$. 
Let $M_0[s\rangle M_1$ in $\mathcal{N}$, then 
necessarily, $M_0[s'\rangle M'_1$ in  
$\mathcal{N}\setminus H$ for $s'$ defined as the
observable projection of $s$. Thus $M'_1 R\, M_1$
by definition of $R$. As $M_1[h\rangle M_2$, we have
also $M'_1 R\, M_2$. By Proposition~\ref{elabeth},
$\mathcal{L}(\mathcal{N}\setminus H,M_1)=$
$\mathcal{L}(\mathcal{N}\setminus H,M'_1)=$ 
$\mathcal{L}(\mathcal{N}\setminus H,M_2)$, hence 
$\mathcal{N}$ has the property SBNDC.

Now assume that $\mathcal{N}$ has the property
SBNDC. By Proposition~\ref{elabeth}, in order
to prove that $\mathcal{N}$ has the property BNDC,  
it suffices to show that $M_1 R\, M_2$ entails
$\mathcal{L}(\mathcal{N}\setminus H,M_1)=
\mathcal{L}(\mathcal{N}\setminus H,M_2)$ 
for all reachable markings $M_1$ and $M_2$ of 
$\mathcal{N}\setminus H$ and $\mathcal{N}$, 
respectively. Let $M_1$ and $M_2$ be two such 
markings and assume that $M_1 R\, M_2$. In view of 
Definition~\ref{R}, this relation has been  
derived from the axiom $M R\,M$ using the two 
inference rules (where we have
exchanged the $M_i$ and the $M'_i$ from 
Definition~\ref{R}):
\begin{itemize}
\item
$M'_1 R\, M'_2$ and $M'_1=M_1$ and 
$M'_2[h\rangle M_2$ entail 
$M_1 R\, M_2$
\item
$M'_1 R\, M'_2$ and $M'_1[l\rangle M_1$ and
$M'_2[l\rangle M_2$ entail $M_1 R\, M_2$
\end{itemize}
If $M_1=M_2$, then there is nothing to prove.
In the converse case, one can assume by induction
on the derivation of $M_1 R\, M_2$
that $\mathcal{L}(\mathcal{N}\setminus H,M'_1)=
\mathcal{L}(\mathcal{N}\setminus H,M'_2)$. The
desired conclusion follows then from 
Definition~\ref{sbndc} for the first rule,
and from the definition of $R$ and the injective 
labelling of nets for the second rule. 
\end{proof}

Despite the fact that SBNDC requires infinitely many equivalence checks,
one for each reachable marking enabling a high-level transition, it
(and hence also BNDC) can be decided, as will be seen
in the next section.

\subsection{Deciding SBNDC}

In this section, we reduce SBNDC to the conjunction, for all high-level
transitions $h$ and for all low-level transitions $l$, of a predicate
$P(h,l)$ meaning that the enabling or disabling of $l$ in the net after
a sequence of low transitions $s\in L^*$ gives no indication on 
whether $h$ has been fired immediately before $s$.   

\begin{definition}\label{P.def}
Given a two-level net system $\mathcal{N}$
and two transitions $h\in H$ and $l\in L$,
we say that $P(h,l)$ holds {\em iff} for any words 
$s\in L^*$ and $w\in (L\cup H)^*$,
if $M_0[w\rangle M_1$, 
$M_1[h\rangle M_2$, 
$M_1[s\rangle M_3$, and
$M_2[s\rangle M_4$, then
$M_3[l\rangle$ {\em iff}
$M_4[l\rangle$. 
\end{definition}

Figure \ref{non-P.fig} shows a situation where $P(h,l)$ is {\em not} satisfied,
because $l$ is enabled at $M_4$ but not at $M_3$.
This corresponds roughly to 
{\em causal information flow} \cite{BG09} from $h$ to $l$.
The other situation in which $P(h,l)$ is not satisfied is the symmetric one, when
$l$ is enabled at $M_3$ but disabled at $M_4$; this roughly corresponds to
{\em conflict information flow}  \cite{BG09}  from $h$ to $l$.

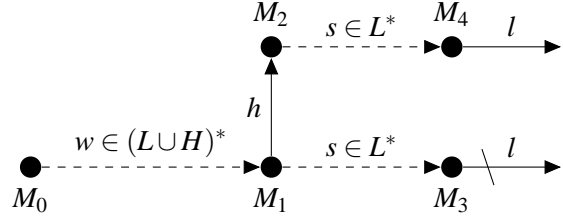
\begin{figure}
\begin{center}
\begin{tikzpicture}[scale=0.8]
\node[circle,fill=black!100,inner sep=0.1cm](M0)at(1,1)[label=below:$M_0$]{};
\node[circle,fill=black!100,inner sep=0.1cm](M1)at(5,1)[label=below:$M_1$]{};
\node[circle,fill=black!100,inner sep=0.1cm](M2)at(5,3)[label=above:$M_2$]{};
\node[circle,fill=black!100,inner sep=0.1cm](M3)at(8,1)[label=below:$M_3$]{};
\node[circle,fill=black!100,inner sep=0.1cm](M4)at(8,3)[label=above:$M_4$]{};
\node[circle,fill=black!0,inner sep=0.1cm](M3dummy)at(10,1){};
\node[circle,fill=black!0,inner sep=0.1cm](M4dummy)at(10,3){};
\draw[-triangle 45](M1.north)to[out=90,in=-90]node[auto]{$h$}(M2.south);
\draw[dashed,-triangle 45](M0.east)to[out=0,in=180]node[auto]{$w\in(L\cup H)^*$}(M1.west);
\draw[dashed,-triangle 45](M1.east)to[out=0,in=180]node[auto]{$s\in L^*$}(M3.west);
\draw[dashed,-triangle 45](M2.east)to[out=0,in=180]node[auto]{$s\in L^*$}(M4.west);
\draw[-triangle 45](M3.east)to[out=0,in=180]node[auto]{$l$}(M3dummy.west);\draw[-](8.7,0.7)--(8.5,1.3);
\draw[-triangle 45](M4.east)to[out=0,in=180]node[auto]{$l$}(M4dummy.west);
\end{tikzpicture}
\end{center}
\caption{Illustration of Property $P(h,l)$}\label{non-P.fig}
\end{figure}

\begin{proposition}
$\mathcal{N}$ has the
property SBNDC {\em iff} 
$P(h,l)$ holds 
for any high-level action $h\in H$ and for any 
low-level action $l\in L$.
\end{proposition}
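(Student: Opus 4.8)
The proposition claims that SBNDC (defined via language/bisimulation equivalence of markings before and after firing a high transition) is equivalent to the conjunction of all predicates $P(h,l)$. By Proposition~\ref{jhkjb}, SBNDC is equivalent to: for every reachable $M_1$ with $M_1[h\rangle M_2$, we have $\mathcal{L}(\mathcal{N}\setminus H,M_1)=\mathcal{L}(\mathcal{N}\setminus H,M_2)$.

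The predicate $P(h,l)$ talks about single low transitions being enabled/disabled after a common continuation $s \in L^*$. So the gap I must bridge is: **equality of the full low-languages** at $M_1,M_2$ versus **equality of single-transition enabledness** after every common low sequence. The bridge is that in $\mathcal{N}\setminus H$ (where all transitions are low and labelling is the identity), two markings have the same language iff they enable exactly the same low transitions after every common low sequence.

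Let me think about the proof.

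**Plan of proof:**

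The plan is to prove both directions using the characterization from Proposition~\ref{jhkjb}.

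For the forward direction (SBNDC $\Rightarrow$ all $P(h,l)$), suppose SBNDC holds and fix $h,l$. Given $w,s$ with $M_0[w\rangle M_1$, $M_1[h\rangle M_2$, $M_1[s\rangle M_3$, $M_2[s\rangle M_4$, SBNDC gives $\mathcal{L}(\mathcal{N}\setminus H,M_1)=\mathcal{L}(\mathcal{N}\setminus H,M_2)$. Since in $\mathcal{N}\setminus H$ every transition is observable (identity labelling), equal languages mean: $sl$ is a word from $M_1$ iff it is a word from $M_2$. Because only low transitions survive in $\mathcal{N}\setminus H$ and firing is deterministic in the marking, $s$ reaches the unique $M_3$ from $M_1$ and the unique $M_4$ from $M_2$, so $sl \in \mathcal{L}(\mathcal{N}\setminus H,M_1)$ iff $M_3[l\rangle$, and similarly $sl \in \mathcal{L}(\mathcal{N}\setminus H,M_2)$ iff $M_4[l\rangle$. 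Equality of the languages thus forces $M_3[l\rangle \Leftrightarrow M_4[l\rangle$, which is exactly $P(h,l)$.

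For the converse (all $P(h,l)$ $\Rightarrow$ SBNDC), assume every $P(h,l)$ holds, and fix reachable $M_1[h\rangle M_2$. I want $\mathcal{L}(\mathcal{N}\setminus H,M_1)=\mathcal{L}(\mathcal{N}\setminus H,M_2)$. The natural approach is induction on the length of words $s \in L^*$, proving the stronger statement that for every $s$, $M_1[s\rangle$ iff $M_2[s\rangle$, and that the resulting markings $M_3,M_4$ continue to agree on which single transitions are enabled. The base case $s=\varepsilon$ is immediate. For the inductive step, given that $s$ is a common word reaching $M_3$ from $M_1$ and $M_4$ from $M_2$, the predicate $P(h,l)$ (instantiated at this very $w,s$) gives $M_3[l\rangle \Leftrightarrow M_4[l\rangle$, so $sl$ is again a common word; this lets the induction proceed.

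**The main obstacle:**

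The hard part is the inductive step in the converse, specifically keeping the induction hypothesis strong enough. Knowing only that $M_1$ and $M_2$ enable the same words is not obviously enough to conclude anything about $M_3$ and $M_4$ as reached by $s$, unless I know these intermediate markings are themselves reached by a common sequence. The key realization is that $P(h,l)$ is \emph{quantified over all} $s \in L^*$ and all $w$, so when I extend $s$ by one step I simply re-apply $P(h,l)$ at the same fixed $w$ with the longer suffix. Thus the induction is really an induction establishing, for each common low prefix, that the pair of reached markings enables the same next transition; the set of common words is then built up one symbol at a time, and language equality is the union over all such words. I expect the bookkeeping — confirming that firing $s$ deterministically yields unique $M_3,M_4$ in $\mathcal{N}\setminus H$ so that enabledness of $l$ genuinely captures membership of $sl$ in the language — to be the only subtle point, and it rests on the identity labelling making $\mathcal{N}\setminus H$ a purely low net where words and firing sequences coincide.
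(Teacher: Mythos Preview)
Your proposal is correct and takes essentially the same approach as the paper: both reduce to Proposition~\ref{jhkjb} and then exploit that in $\mathcal{N}\setminus H$ the labelling is the identity, so language equality amounts to agreement of one-step enabledness after every common low prefix. The paper simply declares this a ``direct consequence'' of Proposition~\ref{jhkjb} without spelling out the induction you carry out, but the underlying argument is the same.
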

\begin{proof}
This is a direct consequence of Proposition~\ref{jhkjb}.
Indeed, $M_1[h\rangle M_2$ and $P(h,l)$ for all $l$ entail
$\mathcal{L}(\mathcal{N}\setminus H,M_1)=
\mathcal{L}(\mathcal{N}\setminus H,M_2)$,
and conversely, $\mathcal{L}(\mathcal{N}\setminus H,M_1)=
\mathcal{L}(\mathcal{N}\setminus H,M_2)$ for all 
transitions $M_1[h\rangle M_2$ entail $P(h,l)$ for all $l$.
\end{proof}

We will now show that $P(h,l)$ is a decidable 
property, entailing that one can decide whether 
a given net system $\mathcal{N}$ has the property SBNDC
(because in a finite net, there are finitely many pairs $(h,l)$).
  
\begin{proposition}
$P(h,l)$ is a decidable property.
\end{proposition}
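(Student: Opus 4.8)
The plan is to reduce the negation of $P(h,l)$ to finitely many instances of the \emph{reachability problem} for PT-nets, a classical decidable problem, each instance being produced by a product construction that runs two synchronized copies of $\mathcal{N}$ side by side.

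Recall that a low transition $l$ is enabled at a marking $M$, written $M[l\rangle$, iff $M(p)\ge F(p,l)$ for every place $p$, and write $F(\cdot,l)$ for the vector $p\mapsto F(p,l)$. Negating Definition~\ref{P.def}, $P(h,l)$ fails iff there are $w\in(L\cup H)^*$, $s\in L^*$ and markings with $M_0[w\rangle M_1$, $M_1[h\rangle M_2$, $M_1[s\rangle M_3$, $M_2[s\rangle M_4$ such that exactly one of $M_3[l\rangle$ and $M_4[l\rangle$ holds. First I would build a net $\widehat{\mathcal{N}}$ on two disjoint copies $P^{(1)},P^{(2)}$ of the places of $\mathcal{N}$, plus a few control places sequencing three phases. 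In the first phase, for each $t\in T$ a transition fires $t$ simultaneously on both copies; started from $M_0$ on each copy, the two copies remain in lockstep and reach a common reachable marking $M_1$. In the second phase a single transition fires $h$ on the second copy only, which is enabled precisely when $M_1\ge F(\cdot,h)$, i.e. $M_1[h\rangle$, leaving copy $1$ at $M_1$ and copy $2$ at $M_2$. In the third phase, for each $l'\in L$ a transition fires $l'$ simultaneously on both copies, so that one common word $s\in L^*$ drives copy $1$ from $M_1$ to $M_3$ and copy $2$ from $M_2$ to $M_4$. Thus the reachable configurations of $\widehat{\mathcal{N}}$ with the control token in the third phase are exactly the quadruples admissible in Definition~\ref{P.def}.

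It remains to detect, among these configurations, the ones violating the biconditional, and to phrase this as reachability of a fixed target marking. Consider the asymmetric case $M_3[l\rangle$ and $\neg M_4[l\rangle$. The positive part is a coverability condition on copy $1$: add a one-shot transition, enabled only after the third phase, that consumes $F(p,l)$ tokens from each $p^{(1)}$, so it can fire iff $M_3\ge F(\cdot,l)$. The negative part $\neg M_4[l\rangle$ means $M_4(p)\le F(p,l)-1$ for \emph{some} place $p$ with $F(p,l)\ge 1$; fixing such a witness $p$, install a drain transition removing one token from $p^{(2)}$ but guarded by a fresh counter place pre-charged with $F(p,l)-1$ tokens, so the drain may fire at most $F(p,l)-1$ times. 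Finally add free drains emptying all other copy places and the counter, and route the control token to a distinguished place $\mathit{done}$. The target marking carrying a single token on $\mathit{done}$ and zero elsewhere is then reachable iff some admissible configuration satisfies $M_3\ge F(\cdot,l)$ and $M_4(p)\le F(p,l)-1$: if $M_4(p)$ exceeds $F(p,l)-1$, the bounded drain cannot empty $p^{(2)}$ and the target is unreachable. The symmetric case $M_4[l\rangle$ and $\neg M_3[l\rangle$ is handled identically with the two copies exchanged.

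The reduction yields one reachability query per place $p$ in the preset of $l$ and per asymmetric case, hence finitely many, and $P(h,l)$ fails iff at least one query answers ``yes''. The only genuine difficulty is the negative requirement $\neg M_4[l\rangle$ (respectively $\neg M_3[l\rangle$): being the complement of a coverability property, it cannot be tested by coverability alone, which is exactly why the argument is cast as a \emph{reachability} question and uses the capacity-bounded drain gadget to enforce the upper bound $M_4(p)\le F(p,l)-1$. Since the reachability problem for PT-nets is decidable, each query is decidable, and therefore $P(h,l)$ is decidable.
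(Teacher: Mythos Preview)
Your proof is correct and rests on the same core idea as the paper: run two synchronized copies of $\mathcal{N}$ side by side, fire $h$ only on the second copy, then let a common low word $s$ drive the copies to $(M_3,M_4)$, and finally test whether $l$ is enabled on exactly one copy. The paper implements the synchronization via the composition $\mathcal{N}_1\,|\,\mathcal{N}_2$ (shared transitions) and two control places $x,y$ playing the role of your phase tokens; your explicit ``double-fire'' transitions and three-phase sequencing are an equivalent formulation.

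The genuine difference is in how the final asymmetry test is discharged. The paper adds local probe transitions $l'_1,l'_2$ and simply asks whether some reachable marking $M'$ satisfies $(M'[l'_1\rangle\wedge\neg M'[l'_2\rangle)\vee(\neg M'[l'_1\rangle\wedge M'[l'_2\rangle)$, invoking the fact that this set of markings is semi-linear and that reachability of a semi-linear target is decidable (Corollary~\ref{decid.cor}, which ultimately rests on Hack's reduction to plain reachability). You instead encode the test by hand: you split into the two asymmetric cases, fix a witness place $p$ for the disabled side, use a capacity-bounded drain to enforce the upper bound $M_4(p)\le F(p,l)-1$, freely drain the rest, and ask for reachability of a single fixed target. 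This buys you a proof that appeals only to the bare reachability theorem, at the price of more gadgetry and a case split over $p\in{}^\bullet l$; the paper's route is shorter but relies on the higher-level result about semi-linear reachability. Either way, the argument is sound.
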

\begin{proof}
Let a net $\mathcal{N}$ with initial marking $M_0$ and two fixed transitions $h\in H$ and $l\in L$ be given.
Let $\mathcal{N}_1$ be an exact copy of $\mathcal{N}$, with place set $P_1$, except that it also contains
another `local' copy $l_1'$ of transition $l$.
Let $\mathcal{N}_2$ be another exact copy of $\mathcal{N}$, with place set $P_2$ (disjoint from $P_1$),
except that it also contains a local copy $l_2'$ of transition $l$ and a local copy $h'$ of transition $h$.
Let $\mathcal{N}'$ be defined as $\mathcal{N}_1|\mathcal{N}_2$ plus two further places $x$ and $y$
and the following extension of $F'$:
\begin{itemize}
\item[(a)]
$x$ is connected to all transitions in $H$ by a side-condition loop.
\item[(b)]
$F'(x,h')=1$, $F'(h',y)=1$, $F'(y,l_1')=1$ and $F'(y,l_2')=1$.
\end{itemize}
Finally, let $x$ be initially marked with $1$ token and $y$ with $0$ tokens.
The idea is that $\mathcal{N}'$ contains two components, one simulating the path from $M_0$ to 
$M_3$ in Figure \ref{non-P.fig}, and another one simulating the path from $M_0$ to $M_4$, if such paths exist. 

It is claimed that $P(h,l)$ holds true in $\mathcal{N}$ if and only if in the net $\mathcal{N}'$ so constructed,
it is {\em not} possible to reach a marking $M'$ such that
\begin{equation}\label{P.eq}
(M'[l_1'\rangle\wedge\neg M'[l_2'\rangle)\vee(\neg M'[l_1'\rangle\wedge M'[l_2'\rangle).
\end{equation}
To see ($\Rightarrow$), suppose that $M_0'[v\rangle M'$ where $M_0'$ is the
initial marking of $\mathcal{N}'$ defined above, and where $M'$ satisfies (\ref{P.eq}).
By (b) and because $M'$ enables either $l_1'$ or $l_2'$,
$h'$ occurs exactly once in $v$, and neither $l_1'$ nor $l_2'$ occur in $v$.
Hence $v$ can be split as $M_0'[v_1h'v_2\rangle M'$
such that $v_1$ and $v_2$ contain only transitions of $H\cup L$.
By (a), $v_2$ contains only transitions from $L$.
Because $h'$ does not change the tokens on place set $P_1$,
$v_1v_2$ is an execution sequence of $\mathcal{N}_1$,
whence $M_0[v_1v_2\rangle$ in $\mathcal{N}$.
Because $h'$ acts on $P_2$ exactly as $h$ does,
$v_1hv_2$ is an execution sequence of $\mathcal{N}_2$,
whence $M_0[v_1hv_2\rangle$ in $\mathcal{N}$.
Because $l_1'$ and $l_2'$ act on $P_1$ and $P_2$, respectively, as does $l$,
$M'_0[v_1 h' v_2 l'_1\rangle$ in $\mathcal{N'}$
         iff $M_0[v_1 v_2 l\rangle$ in $\mathcal{N}$ and
         $M'_0[v_1 h' v_2 l'_2\rangle$ in $\mathcal{N'}$
         iff $M_0[v_1 h v_2 l\rangle$ in $\mathcal{N}$.
         Because $M'$
satisfies (\ref{P.eq}), this means that $P(h,l)$ is false in $\mathcal{N}$.
More precisely, referring to Definition \ref{P.def}, putting $w=v_1$ and $s=v_2$
yields $M_0[ws\rangle M_3$ and $M_0[whs\rangle M_4$
with $\neg(M_3[l\rangle\Leftrightarrow M_4[l\rangle)$ in $\mathcal{N}$.

This argument can easily be reversed in order to prove ($\Leftarrow$).

The proof is finished because by Corollary \ref{decid.cor},
it is decidable whether or not a marking satisfying (\ref{P.eq}) is reachable in $\mathcal{N}'$.
\end{proof}

\begin{corollary}
SBNDC is decidable for finite PT-nets.
\end{corollary}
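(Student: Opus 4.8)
The plan is to assemble the corollary directly from the two immediately preceding propositions, since no genuinely new content is required. The first of them shows that a two-level net system $\mathcal{N}$ has the property SBNDC if and only if $P(h,l)$ holds for every high-level transition $h\in H$ and every low-level transition $l\in L$; the second shows that each individual predicate $P(h,l)$ is decidable. Combining the two, SBNDC is equivalent to the finite conjunction $\bigwedge_{h\in H,\,l\in L} P(h,l)$, and deciding SBNDC reduces to deciding this conjunction.

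First I would invoke the finiteness of the net. By the underlying definition, a PT-net has finite sets of places and transitions, so both $H$ and $L$ are finite and hence the index set $H\times L$ over which the conjunction ranges is finite. This is the only point at which the hypothesis ``finite PT-net'' is used, and it is exactly what turns the characterisation into a \emph{finite} conjunction. I would stress the conceptual subtlety here: finiteness of the net is a structural condition on $(P,T,F)$ and does \emph{not} entail finiteness of the reachability set; the system may well be unbounded, which is precisely the regime in which decidability is non-obvious and which the paper advertises.

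Next I would observe that a finite conjunction of decidable predicates is itself decidable. The decision procedure for SBNDC runs, in turn, the procedure furnished by the preceding proposition on each of the finitely many pairs $(h,l)\in H\times L$, returns ``yes'' exactly when all of them accept, and ``no'' as soon as one rejects. Since each invocation terminates and there are only finitely many of them, the combined procedure terminates.

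There is essentially no obstacle at this level. The substantive work---reducing $P(h,l)$ to the non-reachability, in the auxiliary net $\mathcal{N}'$, of a marking satisfying~(\ref{P.eq}), and then appealing to the decidability of reachability via Corollary~\ref{decid.cor}---has already been discharged in the proof that $P(h,l)$ is decidable. What remains is only the combinatorial packaging just described, so the ``hard part'' of this corollary is entirely inherited from the earlier results rather than encountered afresh here.
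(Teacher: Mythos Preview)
Your proposal is correct and matches the paper's intended argument exactly: the corollary is immediate from the two preceding propositions together with the observation (made explicitly in the paper just before the decidability proposition) that in a finite net there are only finitely many pairs $(h,l)$. The paper offers no separate proof of the corollary, and your write-up is simply an explicit unpacking of this immediate step.
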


\begin{corollary}
BNDC is decidable for finite PT-nets.
\end{corollary}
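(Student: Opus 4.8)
The plan is to obtain this corollary immediately by chaining two results that have already been established: the characterization of BNDC in terms of SBNDC, and the decidability of SBNDC. No fresh construction is needed, since all of the conceptual and computational machinery has been assembled in the preceding subsections.

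First I would recall Theorem~\ref{equiv}, which states that a two-level net system $\mathcal{N}$ has the property BNDC if and only if it has the property SBNDC. The point of this equivalence is that it trades the externally quantified formulation of BNDC in Definition~\ref{bndc} --- which ranges over infinitely many high-level net systems $\mathcal{N'}$ and demands weak bisimilarity in each case --- for SBNDC, a property that speaks only about $\mathcal{N}$ itself, namely about the reachable markings of $\mathcal{N}$ that enable a high-level transition. Next I would invoke the immediately preceding corollary, which asserts that SBNDC is decidable for finite PT-nets. Putting the two together, given a finite two-level PT-net system $\mathcal{N}$ one simply runs the SBNDC decision procedure on $\mathcal{N}$ and returns its verdict as the verdict for BNDC; correctness of this reduction is exactly the content of Theorem~\ref{equiv}. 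This finishes the argument.

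The genuine difficulty is not in the present corollary but in the two facts it rests upon, and I would flag this explicitly. The conceptual obstacle was Theorem~\ref{equiv} itself: showing that the bisimulation-based, context-quantified property BNDC collapses to the purely internal marking property SBNDC, which relied on the relation $R$ and Proposition~\ref{elabeth}. The computational obstacle was the decidability of SBNDC, which came from reducing it, via Proposition~\ref{jhkjb} and the predicate $P(h,l)$, to a finite conjunction of reachability questions in the auxiliary net $\mathcal{N'}$ equipped with the control places $x$ and $y$, so that Corollary~\ref{decid.cor} on the decidability of reachability could be applied. Once those results are in hand, the decidability of BNDC is a direct corollary.
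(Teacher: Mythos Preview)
Your proposal is correct and matches the paper's approach exactly: the corollary is stated without proof because it follows immediately from Theorem~\ref{equiv} (BNDC $\Leftrightarrow$ SBNDC) together with the preceding corollary on the decidability of SBNDC. Your additional remarks tracing the dependencies back through Proposition~\ref{elabeth}, Proposition~\ref{jhkjb}, the predicate $P(h,l)$, and Corollary~\ref{decid.cor} are accurate and simply make explicit what the paper leaves implicit.
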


Figures \ref{decid1.fig} and \ref{decid2.fig} show an example
for the construction in the preceding proof.
In Figure \ref{decid1.fig}, which depicts the net $\mathcal{N}$
with $H=\{h\}$ and $L=\{k,l\}$ on its left-hand side, we have
\[M_0[k\rangle M_3\text{ with }\neg M_3[l\rangle\text{ and }M_0[hk\rangle M_4\text{ with } M_4[l\rangle,
\]
that is, $P(h,l)$ is violated in $\mathcal{N}$.
In Figure \ref{decid2.fig}, which depicts the net $\mathcal{N}'$ resulting from the construction in the proof,
we have
\[M_0'[h'k\rangle M'\text{ with }\neg M'[l_1'\rangle\text{ and }M'[l_2'\rangle,
\]
that is, we find a reachable marking $M'$ satisfying (\ref{P.eq}).

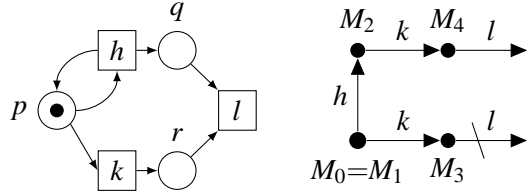
\begin{figure}[htbp]
\begin{center}
\begin{tikzpicture}[scale=0.8]
\node[circle,draw,minimum size=0.5cm](p)at(1,2)[label=left:$p$]{};\filldraw[black](1,2)circle(3pt);
\node[circle,draw,minimum size=0.5cm](q)at(3,3)[label=above:$q$]{};
\node[circle,draw,minimum size=0.5cm](r)at(3,1)[label=above:$r$]{};
\node[draw,minimum size=0.5cm](h)at(2,3){$h$};
\node[draw,minimum size=0.5cm](k)at(2,1){$k$};
\node[draw,minimum size=0.5cm](l)at(4,2){$l$};
\draw[-latex](h.east)--(q.west);
\draw[-latex](q.south east)--(l.north west);
\draw[-latex](r.north east)--(l.south west);
\draw[-latex](k.east)--(r.west);
\draw[-latex](p.south east)--(k.west);
\draw[-latex](h.west)to[out=180,in=90](p.north);
\draw[-latex](p.east)to[out=0,in=270](h.south);
\node[circle,fill=black!100,inner sep=0.08cm](M1)at(6,1.5)[label=below:$M_0{=}M_1$]{};
\node[circle,fill=black!100,inner sep=0.07cm](M2)at(6,3)[label=above:$M_2$]{};
\node[circle,fill=black!100,inner sep=0.07cm](M3)at(7.5,1.5)[label=below:$M_3$]{};
\node[circle,fill=black!100,inner sep=0.07cm](M4)at(7.5,3)[label=above:$M_4$]{};
\node[circle,fill=black!0,inner sep=0.1cm](M3dummy)at(9,1.5){};
\node[circle,fill=black!0,inner sep=0.1cm](M4dummy)at(9,3){};
\draw[-triangle 45](M1.north)to[out=90,in=-90]node[auto]{$h$}(M2.south);
\draw[-triangle 45](M1.east)to[out=0,in=180]node[auto]{$k$}(M3.west);
\draw[-triangle 45](M2.east)to[out=0,in=180]node[auto]{$k$}(M4.west);
\draw[-triangle 45](M3.east)to[out=0,in=180]node[auto]{$l$}(M3dummy.west);\draw[-](8.1,1.2)--(7.9,1.8);
\draw[-triangle 45](M4.east)to[out=0,in=180]node[auto]{$l$}(M4dummy.west);
\end{tikzpicture}
\end{center}
\caption{A system $\mathcal{N}$ violating $P(h,l)$}
\label{decid1.fig}
\end{figure}

\begin{figure}[htbp]
\begin{center}
\begin{tikzpicture}[scale=.8]
\node[circle,draw,minimum size=0.5cm](p2)at(1,4)[label=below:$p_2$]{};\filldraw[black](1,4)circle(3pt);
\node[circle,draw,minimum size=0.5cm](p1)at(3,4)[label=left:$p_1$]{};\filldraw[black](3,4)circle(3pt);
\node[circle,draw,minimum size=0.5cm](q2)at(7,7.5)[label=above:$q_2$]{};
\node[circle,draw,minimum size=0.5cm](q1)at(7,6)[label=above:$q_1$]{};
\node[circle,draw,minimum size=0.5cm](r2)at(7,0.5)[label=below:$r_2$]{};
\node[circle,draw,minimum size=0.5cm](r1)at(7,2)[label=below:$r_1$]{};
\node[circle,draw,minimum size=0.5cm](x)at(4,7.5)[label=right:$x$]{};\filldraw[black](4,7.5)circle(3pt);
\node[circle,draw,minimum size=0.5cm](y)at(4.5,4)[label=below:$y$]{};
\node[draw,minimum size=0.5cm](hd)at(2,7.5){$h'$};
\node[draw,minimum size=0.5cm](h)at(4,6){$h$};
\node[draw,minimum size=0.5cm](k)at(4,2){$k$};
\node[draw,minimum size=0.5cm](l1d)at(7.5,4){$l_1'$};
\node[draw,minimum size=0.5cm](l)at(8.5,4){$l$};
\node[draw,minimum size=0.5cm](l2d)at(9.5,4){$l_2'$};
\draw[-latex](x.west)--(hd.east);
\draw[-latex](h.north east)--(q2.south west);
\draw[-latex](h.east)--(q1.west);
\draw[-latex](q2.south east)--(l.north);
\draw[-latex](q2.east)--(l2d.north);
\draw[-latex](q1.south east)--(l.north west);
\draw[-latex](q1.south)--(l1d.north);
\draw[-latex](y.east)--(l1d.west);
\draw[-latex](r1.north)--(l1d.south);
\draw[-latex](r1.north east)--(l.south west);
\draw[-latex](r2.north east)--(l.south);
\draw[-latex](r2.east)--(l2d.south);
\draw[-latex](k.east)--(r1.west);
\draw[-latex](k.south east)--(r2.north west);
\draw[-latex](p2.south east)--(k.west);
\draw[-latex](p1.south)--(k.north);
\draw[-latex](hd.south)--(y.north west);
\draw[-latex](hd.north east)to[out=20,in=160](q2.north west);
\draw[-latex](y.south east)to[out=-30,in=210](l2d.south west);
\draw[-latex](x.south)to[out=240,in=120](h.north);
\draw[-latex](h.north)to[out=60,in=-60](x.south);
\draw[-latex](h.south west)to[out=220,in=80](p1.north);
\draw[-latex](p1.north east)to[out=40,in=-90](h.south);
\draw[-latex](h.west)to[out=180,in=70](p2.north east);
\draw[-latex](p2.north east)to[out=10,in=230](h.west);
\draw[-latex](hd.west)to[out=240,in=100](p2.north west);
\draw[-latex](p2.north)to[out=70,in=260](hd.south west);
\draw(2,0.7)node{$P_1=\{p_1,q_1,r_1\}$};
\draw(2,0)node{$P_2=\{p_2,q_2,r_2\}$};
\end{tikzpicture}
\end{center}
\caption{A system $\mathcal{N}'$ satisfying (\protect\ref{P.eq}) for some $M'$}
\label{decid2.fig}
\end{figure}
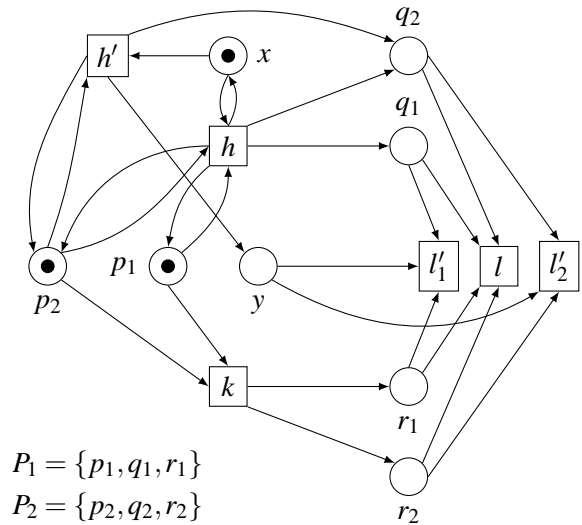

\section{Intransitive non-interference}\label{intransitive}

We enter now a less technical part of the paper, where we try 
to show how the decision results established in 
Section~\ref{noninterference} may be applied to check quality 
of control in the framework of discrete event systems. As it 
would be difficult to present applications to real systems, 
we shall consider toy examples which we hope will at least 
make the intuitions clear.

\begin{figure}[htbp]
\begin{center}
\begin{tikzpicture}[scale=.8]
\node[circle,draw,minimum size=0.5cm](s0)at(6,3.9){};
\node[circle,draw,minimum size=0.5cm](s1)at(2,1.5){};\filldraw[black](2,1.5)circle(3pt);
\node[circle,draw,minimum size=0.5cm](s2)at(6,6.5){};
\node[circle,draw,minimum size=0.5cm](s3)at(10,1.5){};
\node[circle,draw,minimum size=0.5cm](q1)at(4,6.5){};
\node[circle,draw,minimum size=0.5cm](q2)at(10,3.5){};
\node[circle,draw,minimum size=0.5cm](q3)at(4,0){};\filldraw[black](4,0)circle(3pt);
\node[draw,minimum size=0.5cm](h1)at(0,0){$h_1$};
\node[draw,minimum size=0.5cm](h2)at(6,8){$h_2$};
\node[draw,minimum size=0.5cm](h3)at(12,0){$h_3$};
\node[draw,minimum size=0.5cm](t1)at(3.5,3.5){$l_1$};
\node[draw,minimum size=0.5cm](t2)at(7.7,4.5){$l_2$};
\node[draw,minimum size=0.5cm](t3)at(6,1.5){$l_3$};
\node[draw,minimum size=0.5cm](d1)at(6,5){$d_1$};
\node[draw,minimum size=0.5cm](d2)at(7,3){$d_2$};
\node[draw,minimum size=0.5cm](d3)at(5,3){$d_3$};
\draw[-latex](d1)--(s0);
\draw[-latex](d2)--(s0);
\draw[-latex](d3)--(s0);
\draw[-latex](s1)--(d3);
\draw[-latex](s2)--(d1);
\draw[-latex](s3)--(d2);
\draw[-latex](s1)--(t1);
\draw[-latex](t1)--(s2);
\draw[-latex](s2)--(t2);
\draw[-latex](t2)--(s3);
\draw[-latex](s3)--(t3);
\draw[-latex](t3)--(s1);
\draw[-latex](h1)--(q1);
\draw[-latex](q1)--(h2);
\draw[-latex](h2)--(q2);
\draw[-latex](q2)--(h3);
\draw[-latex](h3)--(q3);
\draw[-latex](q3)--(h1);
\draw[-latex](h1)to[out=20,in=240](s1);
\draw[-latex](s1)to[out=200,in=45](h1);
\draw[-latex](h3)to[out=130,in=-20](s3);
\draw[-latex](s3)to[out=300,in=160](h3);
\draw[-latex](h2)to[out=240,in=120](s2);
\draw[-latex](s2)to[out=60,in=290](h2);
\draw[-latex](q1)to[out=340,in=130](d1);
\draw[-latex](d1)to[out=170,in=290](q1);
\draw[-latex](q2)to[out=180,in=20](d2);
\draw[-latex](d2)to[out=-10,in=210](q2);
\draw[-latex](d3)to[out=230,in=100](q3);
\draw[-latex](q3)to[out=40,in=280](d3);
\draw(4.2,4.8)node{$2$};
\draw(4.4,4.8)--(4.6,4.6);
\end{tikzpicture}
\end{center}
\caption{A three-level net system}
\label{net2}
\end{figure}
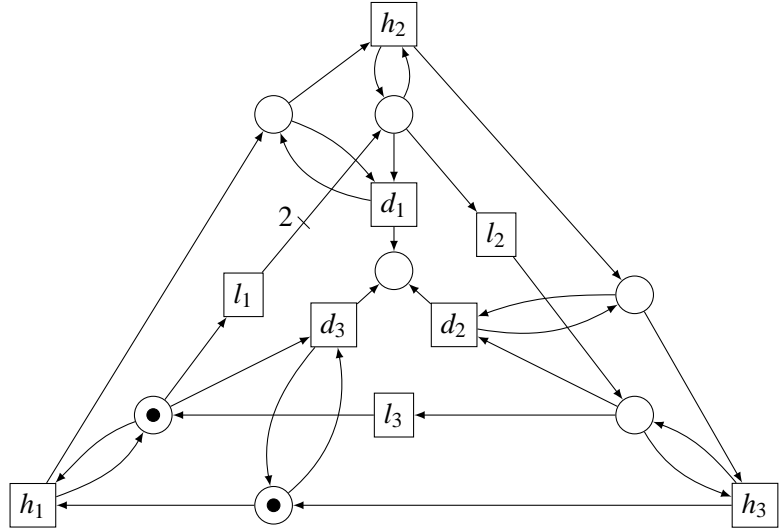

Our first example is the net system shown in Figure~\ref{net2}. 
This net is composed of two directed rings interconnected by 
bidirectional arcs plus a sink place (in the center)
fed by three transitions connected to both rings. Each arc from
a place $p$ to a transition $t$ means a flow $F(p,t)=1$. Each
arc from a transition $t$ to a place $p$ means a flow $F(t,p)=1$,
except for the arc from $l_1$ labeled with $2$, meaning that
$F(l_1,p)=2$ for the target place $p$. The internal ring formed
with the low-level transitions $l_1,l_2,l_3$ represents a flock of
prey that travel clockwise from place to place, and split each 
time they go through $l_1$. The external ring formed with the
high-level transitions $h_1,h_2,h_3$ represents an observer that 
also travels clockwise and watches the prey but moves only if 
some prey has been detected in the location currently observed. 
The three (downgrading) transitions $d_1,d_2,d_3$ represent the
actions of a predator that receives delayed notification of the 
presence of prey from the observer, and therefore anticipates 
their possible moves by one position. The objective of the
observer and predator is of course to catch prey. The 
transitions $l_1,l_2,l_3$ are scheduled by a guardian that pursues 
the opposite objective. Whenever a prey is caught, this has 
direct effect on the set of the possible schedules in 
$\{l_1,l_2,l_3\}^*$, hence there exist interferences between 
$d_1,d_2,d_3$ and $l_1,l_2,l_3$. If the set of possible schedules in 
$\{l_1,l_2,l_3\}^*$ was directly affected by the transitions in 
$h_1,h_2,h_3$, the guardian could glean information on the position 
of the observer and therefore drive the prey to safe locations. 
This is actually not the case, because the high-level 
transitions do not affect the contents of the places connected
to the low-level transitions. The fact that each $d_i$ 
transition reveals that the last transition of the observer 
was the corresponding $h_i$ makes no problem since the prey 
has already been caught. This is the essence of downgrading
transitions and intransitive non-interference in PT-nets, whose
definitions follow. 

\begin{definition}[Three-level net system]
A {\em three-level} PT-net system is a PT-net system 
$\mathcal{N}=(P,T,F,M_0)$ whose set $T$ of 
transitions is partitioned into {\em low level} 
transitions $l\in L$, {\em downgrading}
transitions $d\in D$, and {\em high level} 
transitions $h\in H$, such that $T=L\cup D\cup H$ 
and the sets $L,D$ and $H$ do not intersect.
\end{definition} 

The low-level transitions are supposed to 
be observed by the low user, while the high-level 
transitions cannot be observed and should  
hopefully be kept {\em secret}, i.e. they should not be 
revealed to the low user by the observation of the 
firing sequences in which they occur. The downgrading
transitions may be observed by the low user, but when such a
transition occurs, the requirement that all high-level 
transitions that possibly occurred before should be kept 
secret is cancelled.
This is a strong form of declassification, but we do
not know at present about the decidability of INI or BINI for more
flexible forms of declassification, where each transition $d\in D$
would declassify a corresponding subset $H_d$ of $H$ (Lemma \ref{ini-char.lem},
which is crucial to our proofs, does not apply in such a case).

\begin{definition}[INI-BINI]\label{bini} 
A three-level net system $(N,M_0)$ has the property
INI (Intransitive Non-Interference), resp. BINI 
(Bisimulation-Based Intransitive Non-Interference) 
iff the two-level net system $(N\setminus D,M)$ has
the property NDC, resp. BNDC, for $M=M_0$ and for
any marking $M$ such that $M_0[\upsilon d\rangle M$ 
in $N$ for some sequence $\upsilon\in T^*$ and for 
some downgrading transition $d\in D$.
\end{definition}

The intuition under Definition~\ref{bini} is as follows.
The {\em secret} to be covered is that some high-level 
transition $h$ has occurred {\em after the last 
downgrading transition} $d$, if any such transition was 
ever fired in $\mathcal{N}$. Whenever some downgrading 
transition $d$ is fired, the current secret is deemed
obsolete (the high-level transitions that may have 
occurred before may be revealed by the downgrading 
transition itself or by subsequent low-level 
transitions), and a new secret (namely, that some 
high-level transition may have occurred after the new 
downgrading transition) is decreed. Thus, INI (resp.
BINI) is just a clocked version of NDC (resp. BNDC),
where the ticks of the clock are the downgrading 
transitions. INI/BINI are weakenings of NDC/BNDC but 
they are still very strong security properties. We feel 
that such strong properties are really needed in the 
general context of games, including discrete event 
systems control as a particular case, where {\em any} 
piece of information leaked 
about the strategy of a player to reach its objective 
can be used by the adversary to the opposite goal. 

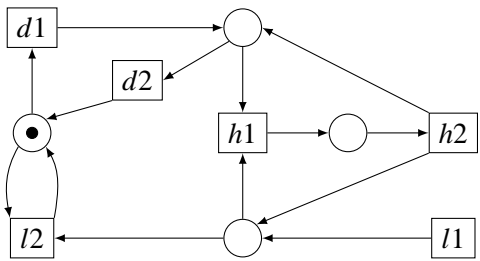
\begin{figure}[htbp]
\begin{center}
\begin{tikzpicture}[scale=0.7]
\node[draw,minimum size=0.5cm](l2)at(0,0){$l2$};
\node[draw,minimum size=0.5cm](l1)at(8,0){$l1$};
\node[draw,minimum size=0.5cm](h1)at(4,2){$h1$};
\node[draw,minimum size=0.5cm](h2)at(8,2){$h2$};
\node[draw,minimum size=0.5cm](d1)at(2,3){$d2$};
\node[draw,minimum size=0.5cm](d2)at(0,4){$d1$};
\node[circle,draw,minimum size=0.5cm](s1)at(0,2){};\filldraw[black](0,2)circle(3pt);
\node[circle,draw,minimum size=0.5cm](s2)at(4,0){};
\node[circle,draw,minimum size=0.5cm](s3)at(4,4){};
\node[circle,draw,minimum size=0.5cm](s4)at(6,2){};
\draw[-latex](s1.south west)to[out=240,in=100](l2.north west);
\draw[-latex](l2.north east)to[out=80,in=300](s1.south east);
\draw[-latex](s1.north)--(d2.south);
\draw[-latex](d2.east)--(s3.west);
\draw[-latex](s3.south west)--(d1.east);
\draw[-latex](d1.south west)--(s1.north east);
\draw[-latex](s2.west)--(l2.east);
\draw[-latex](s3.south)--(h1.north);
\draw[-latex](s2.north)--(h1.south);
\draw[-latex](h1.east)--(s4.west);
\draw[-latex](s4.east)--(h2.west);
\draw[-latex](h2.north west)--(s3.east);
\draw[-latex](h2.south west)--(s2.north east);
\draw[-latex](l1.west)--(s2.east);
\end{tikzpicture}
\end{center}
\caption{Another three-level net system}\label{net1}
\end{figure}   

In order to illustrate better non-interference in 
unbounded PT-nets, we would like to present a second 
example in which the high-level transitions do modify 
the (contents of the) input places of the low-level 
transitions. Consider the net system shown in 
Figure~\ref{net1}. The low-level transition $l1$ is 
always enabled and it represents the arrival of goods 
in a shop. The low-level transition $l2$ represents a 
sale operation and it can only be performed when the 
shop is open, which is indicated by the presence of 
one token in the leftmost place. The downgrading 
transitions $d1$ (closing the shop) and $d2$ (opening 
the shop) are operated by a guard whose friend takes
one article from the shop after closing time 
(high-level transition $h1$) and brings it back before 
opening (high-level transition $h2$). It is easily 
seen that the two high-level transitions form a 
T-invariant and that $l2$ cannot be fired between 
$h1$ and $h2$ because the shop is closed during this 
period. However, in principle, the guard's friend 
might grab the key of the shop ($h1$) immediately 
after each release (by $h2$), and this would impact 
the low view of the system since the transition $l2$ 
could then stay blocked forever (blocking may be 
perceived in weak-bisimulation based semantics). 
Our definition of BINI does not take this pathologic 
behaviour into account. Intuitively, 
Definition~\ref{bini} means that high-level transitions
are transparent to the low-level user (that is to say,
to the controlled system) unless they cause a
starvation of the downgrading transitions (that is to
say, of the controller). Therefore, the net system of 
Figure~\ref{net1} is secure w.r.t. BINI.

In the rest of the section, we show that both 
properties INI and BINI can be decided for 
unbounded PT-nets. 
$\mathcal{N}=(N,M_0)$ denotes always a 
three-level net system where $N=(P,T,F)$ and 
$T$ is partitioned into low-level transitions 
$l\in L$, high-level transitions $h\in H$, 
and downgrading transitions $d\in D$. 

\begin{lemma}\label{ini-char.lem}
$(N,M_0)$ has the property INI iff 
$(N\setminus D,M)\sim(N\setminus (H\cup D),M)$ 
for $M=M_0$ and for any marking $M$ such that
$M_0[\upsilon d\rangle M$ (in $N$) with
$\upsilon\in T^*$ and $d\in D$.
\end{lemma}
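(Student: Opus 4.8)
The plan is to unwind Definition~\ref{bini} and then invoke Proposition~\ref{pr-ndc} pointwise, once for each of the relevant initial markings. The only thing to check before applying that proposition is that, for every marking $M$ under consideration, $(N\setminus D,M)$ is genuinely a two-level net system. This is immediate: deleting the downgrading transitions from a three-level net leaves the transition set $L\cup H$ with $L\cap H=\emptyset$, so $(N\setminus D,M)$ is two-level with observable transitions $L$ and unobservable transitions $H$, exactly as the NDC machinery requires.

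By Definition~\ref{bini}, $(N,M_0)$ has INI iff the two-level net system $(N\setminus D,M)$ has the property NDC, for $M=M_0$ and for every marking $M$ with $M_0[\upsilon d\rangle M$ ($\upsilon\in T^*$, $d\in D$). Fixing any such $M$ and applying Proposition~\ref{pr-ndc} to the two-level net system $(N\setminus D,M)$, whose high-level part is $H$, yields that it has NDC iff $(N\setminus D,M)\sim(N\setminus D,M)\setminus H$. The restriction operation merely removes a set of transitions and preserves the initial marking, so the two nested restrictions collapse into one: $(N\setminus D,M)\setminus H=(N\setminus(D\cup H),M)=(N\setminus(H\cup D),M)$. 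Hence $(N\setminus D,M)$ has NDC iff $(N\setminus D,M)\sim(N\setminus(H\cup D),M)$.

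Quantifying this equivalence over $M=M_0$ and over all markings $M$ reachable by a sequence ending in a downgrading transition then produces precisely the claimed characterisation. There is no genuine obstacle here: the substantive content was already established in Proposition~\ref{pr-ndc}, and this lemma is essentially the bookkeeping observation that NDC, clocked by the downgrading transitions as in Definition~\ref{bini}, coincides with the corresponding family of language equivalences. The only points that call for a modicum of care are that the two restrictions commute into the single restriction $N\setminus(H\cup D)$ and that each instance is read off with the correct initial marking $M$; I would state both explicitly so that the reduction to Proposition~\ref{pr-ndc} is unambiguous.
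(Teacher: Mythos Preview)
Your proposal is correct and follows exactly the paper's approach: the paper's own proof is the single sentence ``This is a direct application of Proposition~\ref{pr-ndc}'', and what you have written is precisely that application spelled out in detail, including the observation that $(N\setminus D,M)\setminus H=(N\setminus(H\cup D),M)$.
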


\begin{proof}
This is a direct application of Proposition~\ref{pr-ndc}.
\end{proof}

\begin{proposition}\label{ksdjfbv}
One can decide whether $(N,M_0)$ has the property
INI.
\end{proposition}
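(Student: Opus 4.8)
The plan is to reduce, via Lemma~\ref{ini-char.lem}, the property INI to the absence of a causal high-to-low flow reachable from any \emph{relevant} marking $M$ (that is, $M=M_0$ or $M_0[\upsilon d\rangle M$ with $d\in D$), and then to encode the search for such a flow as a single marking-reachability question in an auxiliary net, to which Corollary~\ref{decid.cor} applies. First I would pin down what a violation looks like. By Lemma~\ref{ini-char.lem}, $(N,M_0)$ fails INI iff some relevant $M$ satisfies $\mathcal{L}(N\setminus D,M)\neq\mathcal{L}(N\setminus(H\cup D),M)$; since the low transitions have identical place neighbourhoods in both nets, Proposition~\ref{simndc} gives $\mathcal{L}(N\setminus(H\cup D),M)\subseteq\mathcal{L}(N\setminus D,M)$, so a violation is a \emph{strict} inclusion. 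Taking a shortest witness $u$ in the difference and writing $u=sl$ with $s\in L^*$ and $l\in L$, one sees that $s$ is firable purely by low transitions from $M$, reaching a marking $M_3$ at which $l$ is \emph{disabled}, while $sl$ is the low projection of some $\sigma\in(L\cup H)^*$ firable from $M$ in $N\setminus D$, so that $s$ interleaved with high transitions reaches a marking $M_4$ at which $l$ is \emph{enabled}. Thus INI fails iff, for some relevant $M$, there are $s\in L^*$ and $l\in L$ with a pure-low run $M[s\rangle M_3$, $\lnot M_3[l\rangle$, and a mixed run from $M$ of low projection $s$ reaching $M_4$ with $M_4[l\rangle$. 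This is exactly the causal direction of Property $P(h,l)$ (Definition~\ref{P.def}, Figure~\ref{non-P.fig}), now relativised to a starting marking $M$ and allowing arbitrarily many high transitions.

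Next I would build an auxiliary net $\widehat{\mathcal{N}}$ in the spirit of the $P(h,l)$ construction, but organised in two phases controlled by places $\mathit{ph}_0,\mathit{ph}_1$. It contains two disjoint copies of $N$, on place sets $P_1$ and $P_2$, both initialised to $M_0$. In phase $0$ (token on $\mathit{ph}_0$) the transitions of $L\cup D\cup H$ fire in \emph{lockstep} on both copies, so the two copies stay equal; a family of switching transitions $\widehat d$ (one per $d\in D$) act like $d$ on both copies and move the token from $\mathit{ph}_0$ to $\mathit{ph}_1$, while an extra transition $\mathit{sw}_0$, enabled only in the initial state, switches to phase $1$ without firing anything. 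This makes the markings entering phase $1$ range over exactly the relevant markings $M$, including $M=M_0$. In phase $1$ (token on $\mathit{ph}_1$) only low transitions fire in lockstep on both copies, together with \emph{local} copies $h'$ of the high transitions acting on $P_2$ alone, whereas downgradings and the shared high transitions are disabled. Hence copy~$1$ performs a pure-low run and copy~$2$ a mixed run with the \emph{same} low projection, both issued from the common relevant marking $M$. Finally I add two probe transitions $l_1',l_2'$, enabled only in phase $1$ and testing enabledness of $l$ on $P_1$ and on $P_2$, respectively.

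I then claim that INI fails iff $\widehat{\mathcal{N}}$ can reach a marking $M'$ carrying a token on $\mathit{ph}_1$ with $\lnot M'[l_1'\rangle\wedge M'[l_2'\rangle$, i.e. the causal half of condition~(\ref{P.eq}). For the forward direction I schedule the two runs of the previous paragraph as a single lockstep run: between consecutive synchronised low steps, copy~$2$ fires locally the high transitions it needs while copy~$1$ waits, so each shared low remains simultaneously enabled in both copies, and at the end copy~$1$ sits at $M_3$ (with $l$ disabled) while copy~$2$ sits at $M_4$ (with $l$ enabled). The backward direction simply projects any such run onto the two copies to recover the witness $(M,s,l)$. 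Since the target is a boolean combination of place constraints (a token on $\mathit{ph}_1$, enabledness of $l_2'$, disabledness of $l_1'$), Corollary~\ref{decid.cor} decides its reachability, and INI is therefore decidable.

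I expect the main obstacle to be twofold. One part is getting the phase-control gadget to quantify over \emph{exactly} the relevant markings, neither missing $M_0$ nor admitting markings reached by sequences that do not end in a downgrading transition. The other, and harder, part is the scheduling lemma underlying the forward direction: an arbitrary-length causal witness may interleave high transitions among the lows of $s$, and one must re-order it into a lockstep product run in which every shared low is enabled in both copies at once. The restriction to the causal (one-sided) half of~(\ref{P.eq}) is also essential and must be justified through the inclusion $\mathcal{L}(N\setminus(H\cup D),M)\subseteq\mathcal{L}(N\setminus D,M)$: conflict flows, in which a high transition \emph{disables} a low, never break language equivalence and so must not be counted.
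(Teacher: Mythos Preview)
Your argument is correct in substance, and it takes a genuinely different route from the paper's proof. The paper does \emph{not} use the two-copy product construction from the SBNDC section; instead it works with a \emph{single} copy of $N$ and relies on Pelz's theorem (Theorem~\ref{Pelz}). Concretely, the paper first decides $(N\setminus D,M_0)\sim(N\setminus(H\cup D),M_0)$ directly via Pelz (the right-hand side is a free net language), and then, for each fixed $d\in D$, builds a net $\mathcal{N}_d$ with two phases controlled by complementary places $p_d,p'_d$: phase~1 runs all of $T$ to reach some $M$ with $M_0[\upsilon d\rangle M$, a primed copy $d'$ of $d$ switches phases, and phase~2 runs primed copies of $L\cup H$ only; setting $H'=\{h'\mid h\in H\}$ as the sole unobservable transitions, one shows that NDC at all such $M$ is equivalent to $\mathcal{N}_d\sim\mathcal{N}_d\setminus H'$, which is again a Pelz-decidable inclusion since $\mathcal{N}_d\setminus H'$ is free.

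Your approach instead extends the $P(h,l)$ two-copy reachability reduction of the SBNDC section to the INI setting: it replaces the language-inclusion test by a semi-linear reachability query in a product net, invoking only Proposition~\ref{semi-linear-decidable.prop}/Corollary~\ref{decid.cor} rather than Pelz's complement construction. This has the merit of unifying the INI and BINI decision procedures under a single technique and of avoiding Pelz's machinery altogether. The paper's route, on the other hand, avoids the two-copy blow-up and the scheduling argument, and handles the quantification over relevant markings with a simpler one-copy phase gadget. Two small points on your side: your $sw_0$ cannot literally be ``enabled only in the initial state'' in a PT-net, but the standard fix (an $\mathit{init}$ place consumed either by $sw_0$ or by a $\mathit{go}$ transition that enters phase~0) does exactly what you need; and your final reachability target is only the one-sided disjunct of~(\ref{P.eq}), which is still semi-linear, so Proposition~\ref{semi-linear-decidable.prop} applies even though Corollary~\ref{decid.cor} is phrased symmetrically. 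Your restriction to the causal half is indeed justified by the inclusion $\mathcal{L}(N\setminus(H\cup D),M)\subseteq\mathcal{L}(N\setminus D,M)$, as you note.
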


\begin{proof}
First, it can be checked whether 
$(N\setminus D,M_0)\sim(N\setminus (H\cup D),M_0)$,
because all transitions of the net system
$(N\setminus (H\cup D),M_0)$ are observable.
As a matter of fact, $\mathcal{L}((N\setminus(H\cup D),M_0))$
is always included in 
$\mathcal{L}((N\setminus D,M_0))$, and by
E.~Pelz's theorem and corollary (Theorem~\ref{Pelz} 
in the appendix), 
the reverse inclusion can be decided since  
$\mathcal{L}((N\setminus(H\cup D),M_0))$ is a 
free PT-net language.

Now fix some downgrading transition $d\in D$. Let 
$\mathcal{N}_d$ be the net system (with  
underlying net $N_d$) constructed as follows.
\begin{itemize}
\item 
$N_d$ has all places of $N$ plus two places $p_d$ 
and $p'_d$ (the complement of $p_d$). The initial 
marking 
$M_{0d}$ 
of $\mathcal{N}_d$ extends $M_0$ by 
setting one token in $p_d$ and leaving $p'_d$ 
empty. 
\item 
$N_d$ has all transitions $t$ of $N$ with flow 
relations extended by 
$F(p_d,t)=1$ and $F(t,p_d)=1$.
\item
$N_d$ has a new 
transition $d'$ with the same flow relations as 
$d$ except that $F(d',p_d)=0$ and $F(d',p'_d)=1$
(whereas $F(d,p_d)=1$ and $F(d,p'_d)=0$).
\item 
$N_d$ has a fresh copy $t'$ of each transition 
$t\in L\cup H$, with the same flow relations as 
$t$ except that $F(p'_d,t')=1$ and $F(t',p'_d)=1$ 
(whereas $F(p_d,t)=1$ and $F(t,p_d)=1$). 
\item 
all transitions of $N_d$,
including $H$ and $D$,  
are low-level transitions 
except for $H'=\{t'\,|\,t\in H\}$.
\end{itemize}
We claim
that $(N\setminus D,M)\sim(N\setminus H\cup D,M)$ 
for any $M$ such that $M_0[\upsilon d\rangle M$ 
in $N$ for the fixed $d\in D$ and for some
$\upsilon\in T^*$ iff  
$\mathcal{N}_d\sim \mathcal{N}_d\setminus H'$
(the proof of this claim, easy but a bit lengthy, 
is given in the annex, see Claim~\ref{sudhbv}).
As all transitions of $\mathcal{N}_d\setminus H'$
are observable, the language of this net system is 
a free PT-net language. It follows by E.~Pelz's
theorem and corollary (Theorem~\ref{Pelz} in the 
appendix) that one 
can decide on the inclusion relation
$\mathcal{L}(\mathcal{N}_d)\subseteq$
$\mathcal{L}(\mathcal{N}_d\setminus H')$. As there
are finitely many downgrading transitions
$d\in D$, 
by the above claim, one can decide
whether a PT-net
system has the property INI. 
\end{proof}

\begin{lemma}\label{lem-bini}
$(N,M_0)$ has the property BINI iff
for any reachable marking $M_1$ of 
$\mathcal{N}$ and for any high-level transition 
$h\in H$, $M_1[h\rangle M_2$ entails  
$\mathcal{L}(N\setminus(H\cup D),M_1)= 
\mathcal{L}(N\setminus(H\cup D),M_2)$.
\end{lemma}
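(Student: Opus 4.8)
The plan is to reduce the statement to the two-level characterisation of BNDC that has already been established, and then to identify exactly which markings matter. By Definition~\ref{bini}, $(N,M_0)$ has BINI precisely when the two-level net system $(N\setminus D,M)$ has BNDC for $M=M_0$ and for every marking $M$ with $M_0[\upsilon d\rangle M$ in $N$ ($\upsilon\in T^*$, $d\in D$). For each such two-level system the high transitions are those in $H$ and the low transitions are those in $L$, and $(N\setminus D)\setminus H=N\setminus(H\cup D)$. Combining Theorem~\ref{equiv} with Proposition~\ref{jhkjb}, applied to $(N\setminus D,M)$ as the two-level system, I get that $(N\setminus D,M)$ has BNDC if and only if, for every marking $M_1$ reachable in $(N\setminus D,M)$ and every $h\in H$, $M_1[h\rangle M_2$ entails $\mathcal{L}(N\setminus(H\cup D),M_1)=\mathcal{L}(N\setminus(H\cup D),M_2)$. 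Thus BINI is equivalent to the conjunction of this language condition taken over all admissible $M$.

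The key step is to show that, as $M$ ranges over $M_0$ and over all markings reached by a sequence ending in a downgrading transition, the union of the reachable sets $RS(N\setminus D,M)$ equals exactly the reachability set $RS(\mathcal{N})$ of the full three-level net. One inclusion is immediate: each admissible $M$ is reachable in $\mathcal{N}$, and firing only transitions of $L\cup H$ from $M$ keeps one inside $RS(\mathcal{N})$. For the reverse inclusion I would take any reachable marking $M_1$ of $\mathcal{N}$, say $M_0[\sigma\rangle M_1$, and inspect the last occurrence of a downgrading transition in $\sigma$. If $\sigma$ contains no transition of $D$, then $M_1$ is reachable from $M_0$ using only $L\cup H$, so $M_1\in RS(N\setminus D,M_0)$. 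Otherwise write $\sigma=\upsilon d\rho$ with $d\in D$ the last downgrading transition, so that $\rho\in(L\cup H)^*$; letting $M$ be the marking after $\upsilon d$ gives $M_0[\upsilon d\rangle M$ and $M[\rho\rangle M_1$, whence $M_1\in RS(N\setminus D,M)$ for an admissible $M$.

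With this marking-set identity in hand, both directions follow. Since deleting $D$ does not alter the place neighbourhood of any $h\in H$, a step $M_1[h\rangle M_2$ occurs in $N$ if and only if it occurs in $N\setminus D$, so the two language conditions speak about the same transitions. If $(N,M_0)$ has BINI, then for any reachable $M_1$ of $\mathcal{N}$ and any $h$ with $M_1[h\rangle M_2$ I realise $M_1$ inside some $RS(N\setminus D,M)$ with $M$ admissible, and the BNDC characterisation of that system yields the required language equality; conversely, if the global language condition holds, then for every admissible $M$ every marking of $RS(N\setminus D,M)$ is a reachable marking of $\mathcal{N}$, so the hypothesis supplies exactly the equalities needed to conclude that $(N\setminus D,M)$ has BNDC, hence that $(N,M_0)$ has BINI. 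I expect the only genuine obstacle to be the marking-set identity of the second paragraph — in particular the careful splitting of a firing sequence at its last downgrading transition — while the remainder is bookkeeping layered on top of Theorem~\ref{equiv} and Proposition~\ref{jhkjb}.
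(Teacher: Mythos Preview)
Your proposal is correct and follows essentially the same route as the paper: unfold Definition~\ref{bini}, apply Theorem~\ref{equiv} together with Proposition~\ref{jhkjb} to each two-level system $(N\setminus D,M)$, and then observe that the union of the sets $RS(N\setminus D,M)$ over the admissible $M$ coincides with $RS(\mathcal{N})$. The paper compresses this last step into the single phrase ``grouping the case $M=M_0$ with the other cases,'' whereas you spell out the splitting at the last downgrading transition explicitly; otherwise the arguments are the same.
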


\begin{proof}
By 
Proposition~\ref{jhkjb} and 
Theorem~\ref{equiv}, $(N,M_0)$ 
has the property BINI iff the following
entailment relation is satisfied
for $M=M_0$ and for any marking $M$ such 
that $M_0[\upsilon d\rangle M$ (in $N$) 
for some $\upsilon\in T^*$ and $d\in D$:\\
{\bf if} $M[w\rangle M_1$ in $N\setminus D$ 
for some $w\in(H\cup L)^*$\\ and
$M_1[h\rangle M_2$ in $N\setminus D$ 
for some $h\in H$,\\
{\bf then} $\mathcal{L}(N\setminus(H\cup D),M_1)= 
\mathcal{L}(N\setminus(H\cup D),M_2)$.\\
Grouping the case $M=M_0$ with the other 
cases, one obtains the lemma.  
\end{proof}

\begin{definition}
Given a three-level net system $\mathcal{N}$ and two
transitions $h\in H$ and $l\in L$, we say that $Q(h,l)$
holds {\em iff} for any words $\chi\in T^*$
and $s\in L^*$, if $M_0[\chi\rangle M_1$, 
$M_1[h\rangle M_2$, 
$M_1[s\rangle M_3$, and
$M_2[s\rangle M_4$, then
$M_3[l\rangle$ {\em iff}
$M_4[l\rangle$. 
\end{definition}

\begin{proposition}
One can decide whether $(N,M_0)$ has the property BINI.
\end{proposition}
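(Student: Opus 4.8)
I would prove decidability of BINI by reducing it to a reachability question over a suitably enlarged net, in close analogy with the decision procedure already used for $P(h,l)$ and SBNDC. By Lemma~\ref{lem-bini}, $(N,M_0)$ has the property BINI if and only if for every reachable marking $M_1$ and every $h\in H$, firing $h$ to reach $M_2$ preserves the observable language of $N\setminus(H\cup D)$, i.e. $\mathcal{L}(N\setminus(H\cup D),M_1)=\mathcal{L}(N\setminus(H\cup D),M_2)$. The first step is to observe that this equality of languages (over nets all of whose surviving transitions are observable) can itself be captured by the predicate $Q(h,l)$ introduced just above: I would prove that the conjunction of $Q(h,l)$ over all $h\in H$ and all $l\in L$ is equivalent to the language-preservation condition of Lemma~\ref{lem-bini}. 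This mirrors exactly the step relating SBNDC to $P(h,l)$, the only change being that the reachability prefix ranges over $\chi\in T^*$ (all three kinds of transitions) rather than over $w\in(L\cup H)^*$, because here the witnessing run $M_0[\chi\rangle M_1$ may legitimately pass through downgrading transitions.

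\textbf{Key steps.} First I would establish the characterization: BINI holds iff $Q(h,l)$ holds for all $h\in H,\,l\in L$. The forward direction uses Lemma~\ref{lem-bini} together with the fact that two markings of a net with injective (identity) labelling have the same language precisely when they enable the same low transitions after every common low continuation $s\in L^*$; the backward direction reconstructs the language equality from the pointwise enabling condition encoded by $Q(h,l)$. Second, and this is the technical heart, I would show that $Q(h,l)$ is decidable by adapting verbatim the construction from the proof that $P(h,l)$ is decidable. I would build a net $\mathcal{N}'$ containing two disjoint copies $\mathcal{N}_1,\mathcal{N}_2$ of $N$ (on place sets $P_1,P_2$), each carrying a local copy $l_1',l_2'$ of $l$, with $\mathcal{N}_2$ additionally carrying a local copy $h'$ of $h$, plus control places $x,y$ wired so that (a) $x$ is a side-condition loop on all transitions that must precede the branch point, and (b) $F'(x,h')=1,\ F'(h',y)=1,\ F'(y,l_1')=1,\ F'(y,l_2')=1$. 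The single token on $x$ forces $h'$ to fire exactly once, splitting any witnessing run as $\chi h' s$ with $s$ low; the two copies then simulate the $M_3$ and $M_4$ branches of the $Q(h,l)$ diagram. One shows $Q(h,l)$ fails in $\mathcal{N}$ iff $\mathcal{N}'$ can reach a marking $M'$ satisfying the disjunction $(M'[l_1'\rangle\wedge\neg M'[l_2'\rangle)\vee(\neg M'[l_1'\rangle\wedge M'[l_2'\rangle)$, and this is decidable by Corollary~\ref{decid.cor}.

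\textbf{The main obstacle.} The principal subtlety, and the one genuine difference from the $P(h,l)$ construction, is the treatment of the prefix $\chi\in T^*$ versus $w\in(L\cup H)^*$. In the two-level SBNDC setting the side-condition wiring of $x$ only had to permit high transitions before the branch; here the prefix may also contain downgrading transitions $d\in D$, so the loop in item~(a) must be arranged so that all of $L\cup H\cup D$ can fire while the token sits on $x$, while the suffix after $h'$ is constrained to $L$ alone. I expect the care needed to verify that $h'$ (not $h$) fires exactly once and that no $d$ interferes with the blocking role of $y$ on $l_1',l_2'$ to be the delicate point; but once the wiring correctly enforces the shape $\chi h' s$ with $s$ low, the correspondence between non-$Q$ witnesses and reachable markings satisfying the disjunction follows by the same reversible case analysis as for $P(h,l)$. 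Finally, since $D$, $H$ and $L$ are finite, deciding $Q(h,l)$ for each of the finitely many pairs $(h,l)$ settles BINI, completing the proof.
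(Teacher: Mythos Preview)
Your proposal is correct and follows essentially the same approach as the paper: reduce BINI via Lemma~\ref{lem-bini} to the conjunction of the predicates $Q(h,l)$, then decide each $Q(h,l)$ by the same net construction used for $P(h,l)$, with the side-condition loop on $x$ extended from $H$ to $H\cup D$ so that the prefix may range over all of $T^*$ while the suffix after $h'$ is forced into $L^*$. The paper compresses all of this into the single observation that ``$Q(h,l)$ is the same as $P(h,l)$, up to replacing $H$ with $H\cup D$,'' but your more explicit unpacking of the wiring change and of the equivalence between the language condition and the $Q(h,l)$ predicates matches the intended argument exactly.
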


\begin{proof}
By Lemma~\ref{lem-bini}, $\mathcal{N}$ has the 
property BINI iff $Q(h,l)$ holds for every high-level 
action $h$ and for every low-level action $l$. As $Q(h,l)$ 
is the same as $P(h,l)$, up to replacing $H$ with 
$H\cup D$, $Q(h,l)$ is decidable. Therefore, the 
BINI property can be decided for PT-net systems. 
\end{proof}

As nets are labeled injectively on transitions, 
$\mathcal{L}(N\setminus(H\cup D),M_1)=
\mathcal{L}(N\setminus(H\cup D),M_2)$
iff $M_1\approx M_2$ 
w.r.t. $\Sigma_o=L$.
Therefore, BINI 
coincides exactly with the property BNID specified by 
Definition 5.7 in \cite{GV09}.

\section{Conclusion and future work}\label{conc}

The examples we have discussed seem to suggest that there is a clear, structural reason why an interference
is present in a net system: either a high-level transition is causing a low-level transition (e.g., Example \ref{simple})
or a high-level transition and a low-level one are competing for the same token in a place (e.g., Example \ref{conf}).
As a matter of fact, in \cite{BG09} one of the authors showed that precisely this is the case when restricting net systems
to elementary net systems (which are essentially PT-nets where each place can contain at most one token).
More precisely, a (contact-free) elementary net system $\mathcal{N}$ is BNDC if and only if it is
never the case that a low transition consumes a token that {\em must} have been produced by a high transition nor that a high transition and 
a low-transition compete for the very same token in a place.

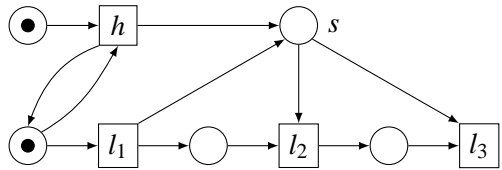
\begin{figure}[htpb]
\begin{center}
\begin{tikzpicture}[scale=0.8]
\node[circle,draw,minimum size=0.5cm](s0)at(1,3){};\filldraw[black](1,3)circle(3pt);
\node[circle,draw,minimum size=0.5cm](s)at(5.5,3)[label=right:$s$]{};
\node[circle,draw,minimum size=0.5cm](s1)at(1,1){};\filldraw[black](1,1)circle(3pt);
\node[circle,draw,minimum size=0.5cm](s2)at(4,1){};
\node[circle,draw,minimum size=0.5cm](s3)at(7,1){};
\node[draw,minimum size=0.5cm](h)at(2.5,3){$h$};
\node[draw,minimum size=0.5cm](l1)at(2.5,1){$l_1$};
\node[draw,minimum size=0.5cm](l2)at(5.5,1){$l_2$};
\node[draw,minimum size=0.5cm](l3)at(8.5,1){$l_3$};
\draw[-latex](s0.east)--(h.west);
\draw[-latex](h.east)--(s.west);
\draw[-latex](s1.east)--(l1.west);
\draw[-latex](l1.east)--(s2.west);
\draw[-latex](s2.east)--(l2.west);
\draw[-latex](l2.east)--(s3.west);
\draw[-latex](s3.east)--(l3.west);
\draw[-latex](l1.north east)--(s.south west);
\draw[-latex](s.south)--(l2.north);
\draw[-latex](s.south east)--(l3.north west);
\draw[-latex](h.south west)to[out=200,in=70](s1.north);
\draw[-latex](s1.north east)to[out=30,in=240](h.south);
\end{tikzpicture}
\end{center}
\caption{A non BNDC net}\label{counterexample}
\end{figure}

Unfortunately, generalizing this characterization in the setting of general PT-nets seems problematic.
Consider the net system $\mathcal{N}$ shown in Figure \ref{counterexample}.
Let $M_0$ be the initial marking indicated in the figure. Set $M_0  [h\rangle M_1$ and set also $M_0 [l_1 l_2 \rangle M_2$ and
$M_1 [l_1 l_2 \rangle M_3$. Clearly, transition $l_3$ is enabled at $M_2$ but disabled at $M_3$, hence $\mathcal{N}$ is not BNDC. However, in the firing sequence $M_0 [h l_1 l_2 l_3 \rangle$, the token consumed from place $s$ by the low-level transition $l_3$ {\em may} have been produced by the high-level transition $h$ but it {\em may} also have been produced alternatively by the low-level transition $l_1$.

As regards continuations of this work, it would be useful to look at flexible versions of downgrading, where each downgrading action bears upon a specific subset of high-level actions. A wider perspective would be to investigate non-interference in the framework of games of partial information, see e.g. \cite{RESDSW10} for a survey on Games for Security.

\section*{Acknowledgment}

The authors would like to thank the reviewers for their comments.

\section{Annex}

\begin{definition}[PT-nets]
A {\em PT-net} is a bi-partite graph $N=(P,T,F)$, 
where $P$ and $T$ are {\em finite} disjoint sets of vertices, 
called {\em places} and {\em transitions}, 
respectively, and $F:(P\times T)\cup(T\times P)
\rightarrow\bbbn$ is a set of directed edges with
non-negative integer weights. A {\em marking} of 
$N$ is a map $M:P\rightarrow\bbbn$. A transition 
$t\in T$ is {\em enabled at} a marking $M$ 
(notation:$M[t\rangle$) if $M(p)\geq F(p,t)$ for 
all places $p\in P$. It $t$ is enabled at $M$, 
then it can {\em be fired}, leading to the new
marking $M'$ (notation: $M[t\rangle M'$) defined 
by $M'(p)=M(p)+F(t,p){-}F(p,t)$ for all $p\in P$.
These definitions are extended inductively to 
transition sequences $s\in T^*$: 
for the empty sequence $\varepsilon$,
$M[\varepsilon\rangle$ and $M[\varepsilon\rangle M$
are always true; for a non-empty sequence 
$s t$ with $t\in T$, $M[s t\rangle$ 
(or $M[s t\rangle M'$) iff 
$M[s\rangle M''$ and $M''[t\rangle$ 
(or $M''[t\rangle M'$, respectively) for some $M''$.
A marking $M'$ is {\em reachable} from a marking 
$M$ if $M[s\rangle M'$ for some $s\in T^*$. 
The set of markings reachable from $M$ is denoted by 
$[M\rangle$.
\end{definition}

\begin{theorem}[Mayr \cite{Mayr}]\label{Mayr.thm}
Given a PT-net $N$ and two markings $M$ and $M'$,
one can decide whether $M'$ is reachable from $M$. 
\end{theorem}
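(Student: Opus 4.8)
The plan is to reduce the question ``is $M'\in[M\rangle$?'' to the solvability of an effectively constructible \emph{finite} family of linear systems over $\bbbn$, so that decidability follows from the decidability of such systems (equivalently, from the fact that their solution sets are semilinear and effectively computable, in the spirit of the appendix). A first, purely algebraic necessary condition is the \emph{marking equation}: if $M[s\rangle M'$ then the Parikh image $x\in\bbbn^{T}$ of $s$ satisfies $M'=M+C\cdot x$, where $C$ is the incidence matrix $C(p,t)=F(t,p)-F(p,t)$. Solvability of this system is decidable, but it is far from sufficient, since it ignores the requirement that every prefix of $s$ keep all place-counts non-negative. The entire difficulty is to supplement the marking equation with enough combinatorial data to make it sufficient.

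The approach I would follow (the KLMST decomposition underlying \cite{Mayr}) organises candidate firing sequences into a finitely-branching tree of \emph{generalised marked-graph transition sequences}. Each node carries a pattern $M\to G_1\to M_1\to\cdots\to M_k\to M'$ in which the intermediate markings are only partially specified: some coordinates are fixed, while others are left as $\omega$, i.e.\ ``arbitrarily large''. Each block $G_i$ is required to be fireable both forwards from its left end and backwards from its right end, provided the $\omega$-marked places carry arbitrarily many tokens. To each node one attaches a system of linear equations and inequalities --- the marking equations of the individual blocks, glued by continuity conditions at the $M_i$ --- together with a boundedness specification. A node satisfies the \emph{consistency} ($\theta$-)condition when this system admits a solution in which exactly the coordinates declared unbounded can be pumped simultaneously to $+\infty$; solvability of such a system is decidable.

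The key lemma would then be: $M'\in[M\rangle$ if and only if some leaf of the tree is consistent. Soundness (a consistent leaf yields an actual sequence) is obtained by pumping the $\omega$-places high enough that each block is genuinely fireable forwards, which removes the non-negativity obstruction; completeness (an actual sequence lands in some consistent leaf) follows by reading off, from a real firing sequence, which places stay bounded and which grow, and matching this profile to a node. With the lemma in hand only finitely many linear tests remain, so decidability reduces to bounding the tree.

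The main obstacle is therefore \emph{termination} of the refinement that replaces an inconsistent node by finitely many strictly simpler children. This is exactly where the unboundedness of the net bites, since the intermediate markings \emph{a priori} range over an infinite set. The standard device is a well-founded \emph{rank} on nodes --- built from the number of genuinely unbounded places together with the dimensions of the associated blocks --- that strictly decreases at each refinement; its well-foundedness rests on Dickson's Lemma, i.e.\ the well-quasi-ordering of $\bbbn^{k}$. Proving that the rank strictly decreases, and simultaneously that a node which can no longer be refined \emph{and} remains inconsistent certifies genuine unreachability, is the technically heavy heart of the argument and the reason the full proof of \cite{Mayr} is long. Once that ranking argument is in place, decidability of each leaf's linear system closes the proof.
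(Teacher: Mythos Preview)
The paper does not prove this theorem at all: it merely states Mayr's result and cites \cite{Mayr}, using it as a black box (via Proposition~\ref{semi-linear-decidable.prop} and Corollary~\ref{decid.cor}) in the decidability arguments of Sections~\ref{noninterference} and~\ref{intransitive}. There is therefore nothing in the paper to compare your attempt against.

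Your sketch is a recognisable high-level outline of the KLMST decomposition that underlies Mayr's original proof and its later reformulations (Kosaraju, Lambert), so in spirit it points at the right argument. But note that what you have written is not a proof, only a proof \emph{plan}: the crucial ingredients you yourself flag as ``the technically heavy heart of the argument'' --- the precise definition of the rank, the proof that refinement strictly decreases it, and the soundness/completeness of consistent leaves --- are exactly the substance of \cite{Mayr} and are not supplied here. For the purposes of the present paper this is immaterial, since the theorem is imported wholesale; if you actually need to prove it, you must carry out those steps in full rather than gesture at them.
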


\begin{definition}[Free language of a net system]
The {\em free language} of a Petri net system
$\mathcal{N}$ is the language of the LTS
$RG(\mathcal{N})$, where all transitions are considered 
observable, i.e., $\Sigma_o=T$.
In this case, we write $\mathcal{L}(\mathcal{N})$ to denote the free 
language.
\end{definition}

\begin{theorem}[Pelz \cite{Pelz}]\label{Pelz} 
The complement in $\Sigma_o^*$ of the free language of 
a net system may be generated by a labeled net 
$(\mathcal{N},\lambda)$ with a finite set of final 
partial markings, characterized by a formula $\mathcal{F}$ 
built from the logical connectives $\wedge$ and $\vee$ 
and atomic formulas $M(p)=i$ 
(with $p\in P$ and $i\in\bbbn$). In other words, a sequence 
$s\in\Sigma_o^*$ belongs to this complement 
if and only if $s=\lambda(t_1 t_2\ldots t_n)$ for some 
sequence of transitions $M_0[t_1t_2\ldots t_n\rangle M$ 
of $\mathcal{N}$ such that $M$ satisfies $\mathcal{F}$. 
\end{theorem}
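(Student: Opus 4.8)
The plan is to exploit two elementary facts about the free language $L=\mathcal{L}(\mathcal{N})$. First, $L$ is prefix-closed, since every prefix of a firing sequence is again firable. Second, because $\Sigma_o=T$ the labelling is the identity, so each symbol names a unique transition and the (non-)firability of a word is unambiguous; moreover the marking reached by a firing sequence is uniquely determined. Combining these, a word $s$ lies in the complement $\overline{L}$ \emph{iff} it factors as $s=u\,t\,v$ where $u$ is firable, $M_0[u\rangle M$, the transition $t$ is \emph{disabled} at $M$ (i.e.\ $M(p)<F(p,t)$ for some place $p$ with $F(p,t)\ge 1$), and $v\in T^*$ is arbitrary. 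So I would build a net $\mathcal{N}'$ with a positive final-marking formula $\mathcal{F}$ that guesses this factorization and certifies it.

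For the construction, $\mathcal{N}'$ keeps all places of $N$, adds one control place $\mathit{run}$ marked with a single token initially, and for every pair $(t,p)$ with $F(p,t)\ge 1$ a fresh, initially empty \emph{flag} place $f_{t,p}$. The transitions are of three kinds: (i) a \emph{simulating} copy $t^{\mathit{run}}$ of each $t\in T$, labelled $t$, carrying the flow of $t$ on $P$ together with a side-condition loop on $\mathit{run}$, so that it faithfully reproduces the firings of $N$ while $\mathit{run}$ is marked; (ii) for each pair $(t,p)$ an \emph{error-entering} transition $e_{t,p}$, labelled $t$, which removes the token from $\mathit{run}$, puts one token on $f_{t,p}$, and \emph{touches no place of $P$}; (iii) for each flag $f_{t,p}$ and each symbol $\sigma\in T$ a \emph{free reader} labelled $\sigma$, consisting of a side-condition loop on $f_{t,p}$ and nothing else. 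The accepting formula is
\[
\mathcal{F}\;=\;\bigvee_{t\in T}\ \bigvee_{\substack{p\in P\\ F(p,t)\ge 1}}\Bigl(\,M(f_{t,p})=1\ \wedge\ \bigvee_{j=0}^{F(p,t)-1} M(p)=j\,\Bigr),
\]
which is a finite positive combination of atoms $M(q)=i$, exactly as the statement allows. The net $\mathcal{N}'$ is finite since $T$, $P$, and the number of pairs $(t,p)$ are all finite.

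The key step is the pair of inclusions. For $s\in\overline{L}$, I take the longest firable prefix $u$, replay it with the transitions $t^{\mathit{run}}$, fire $e_{t,p}$ on the blocking symbol $t$ for a witnessing place $p$, and then read the suffix $v$ with free readers. Since $e_{t,p}$ and all free readers leave $P$ untouched, the value of $M(p)$ is \emph{frozen} at its value at the blocking point, so $M(p)<F(p,t)$ persists and $\mathcal{F}$ holds. Conversely, any accepting run must fire \emph{exactly one} $e_{t,p}$: it is the only producer of a flag, and consuming $\mathit{run}$ without returning it disables every further simulating and error-entering transition, so the flag stays at $1$. Everything preceding $e_{t,p}$ is a faithful simulation reading a firable prefix $u$, and satisfaction of $\mathcal{F}$ forces $M(p)<F(p,t)$ at the switch, hence $t$ is disabled after $u$; thus $s$ has the non-firable prefix $u\,t$ and lies in $\overline{L}$. (Note $\varepsilon\in L$ is correctly excluded, since acceptance requires at least the reading of the blocking symbol.)

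The hard part, and the whole reason the theorem needs the final-marking formula, is that a Petri net cannot test an \emph{upper bound} $M(p)<F(p,t)$ on the fly — that would be a zero-test conferring Turing power. My construction circumvents this by never testing the inequality during the run: on entering error mode it freezes the witness place and \emph{defers} the inequality to $\mathcal{F}$, which is precisely the expressive latitude granted. The one point to verify carefully is the bookkeeping around $\mathit{run}$ ensuring the error mode is entered at most once and that it, together with the free readers, leaves $P$ invariant, so that the frozen marking faithfully records the situation at the blocking point.
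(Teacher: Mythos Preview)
The paper does not supply a proof of this theorem: it is quoted from Pelz as a known result in the appendix and then used as a black box (for the corollary on deciding language inclusion). There is therefore nothing in the paper to compare your proposal against.

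That said, your argument stands on its own as a correct and self-contained proof. The key ideas are exactly right: exploit prefix-closure of the free language to reduce membership in the complement to the existence of a first disabled step $u\,t$; freeze the $P$-marking at the switch by making the error transition $e_{t,p}$ and the free readers inert on $P$; and defer the forbidden upper-bound test $M(p)<F(p,t)$---which a Petri net cannot perform dynamically---to the acceptance formula $\mathcal{F}$. The bookkeeping is sound: at most one $e_{t,p}$ can fire since each consumes the unique $\mathit{run}$-token and none returns it; no free reader can fire before some flag is set; no simulating transition can fire after the switch; and the atom $M(f_{t,p})=1$ in $\mathcal{F}$ pins down which error transition fired, so the frozen value $M'(p)$ genuinely records the $P$-marking at the blocking point. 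Finiteness of the constructed net and of the disjunction in $\mathcal{F}$ follows from finiteness of $P$, $T$, and the arc weights. The exclusion of $\varepsilon$ is handled correctly since acceptance forces at least one $e$-transition to fire.
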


\begin{corollary}[Pelz]
The problem whether the language of a labeled
net system $\mathcal{N}_1$ is included in the free 
language of a net system $\mathcal{N}_2$ is decidable. 
\end{corollary}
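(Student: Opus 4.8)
The plan is to reduce the inclusion question to a reachability question, using Pelz's theorem (Theorem~\ref{Pelz}) to complement the larger language and Mayr's theorem (Theorem~\ref{Mayr.thm}) to discharge the resulting reachability check. First I would note that $\mathcal{L}(\mathcal{N}_1)\subseteq\mathcal{L}(\mathcal{N}_2)$ holds \emph{iff} $\mathcal{L}(\mathcal{N}_1)\cap\overline{\mathcal{L}(\mathcal{N}_2)}=\emptyset$, the complement being taken in $\Sigma_o^*$. Since $\mathcal{L}(\mathcal{N}_2)$ is a \emph{free} language, Theorem~\ref{Pelz} furnishes a labeled net system $(\mathcal{N}_2',\lambda')$ over the same alphabet $\Sigma_o$, together with a formula $\mathcal{F}$ built from $\wedge$, $\vee$ and atoms $M(p)=i$, such that a word $s$ lies in $\overline{\mathcal{L}(\mathcal{N}_2)}$ exactly when $s$ labels a firing sequence of $\mathcal{N}_2'$ reaching a marking satisfying $\mathcal{F}$.

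Next I would build the synchronous product net $\mathcal{P}$ realizing $\mathcal{L}(\mathcal{N}_1)\cap\overline{\mathcal{L}(\mathcal{N}_2)}$. Its place set is the disjoint union of the places of $\mathcal{N}_1$ and $\mathcal{N}_2'$; for each observable label $a\in\Sigma_o$ and each pair of transitions $t_1$ of $\mathcal{N}_1$ and $t_2$ of $\mathcal{N}_2'$ with $\lambda_1(t_1)=\lambda'(t_2)=a$, one adds a synchronized transition acting as $t_1$ on the first place set and as $t_2$ on the second, while each silent ($\varepsilon$-labeled) transition of either component is kept as a local transition touching only its own places. A standard interleaving argument shows that a word $s$ labels a firing sequence of $\mathcal{P}$ from the combined initial marking reaching a marking whose $\mathcal{N}_2'$-projection satisfies $\mathcal{F}$ \emph{iff} $s\in\mathcal{L}(\mathcal{N}_1)$ and $s\in\overline{\mathcal{L}(\mathcal{N}_2)}$. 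Thus the inclusion fails exactly when $\mathcal{P}$ can reach a marking satisfying $\mathcal{F}$ on its $\mathcal{N}_2'$-places.

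It then remains to decide reachability of a marking satisfying $\mathcal{F}$. I would put $\mathcal{F}$ into disjunctive normal form $\bigvee_j \mathcal{C}_j$; a satisfying marking is reachable iff some $\mathcal{C}_j$ is met by a reachable marking, and each $\mathcal{C}_j$ fixes $M(p)=i_p$ for $p$ ranging over a finite subset $S_j$ of places (a disjunct requiring two distinct values at one place is discarded as unsatisfiable). This is a \emph{submarking} reachability question, which I would reduce to ordinary reachability by adjoining, for every place outside $S_j$, a transition that removes a single token from it; reachability in $\mathcal{P}$ of some marking meeting $\mathcal{C}_j$ is then equivalent to exact reachability, in the augmented net, of the marking carrying $i_p$ tokens on each $p\in S_j$ and none elsewhere. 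Each such instance is decidable by Theorem~\ref{Mayr.thm}, and there are finitely many disjuncts, so the whole procedure terminates.

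The main obstacle I anticipate is not any single step but the bookkeeping that glues them together: getting the product construction right in the presence of $\varepsilon$-labeled transitions (so that silent moves of one component do not spuriously synchronize with the other), and faithfully reducing ``reaches a marking satisfying $\mathcal{F}$'' to a finite family of exact reachability instances through the token-draining gadget. Once these constructions are verified to preserve the intended languages and marking conditions, the result follows immediately from Theorems~\ref{Pelz} and~\ref{Mayr.thm}.
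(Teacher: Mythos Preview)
Your proof is correct and follows the same overall strategy as the paper: complement $\mathcal{L}(\mathcal{N}_2)$ via Pelz's theorem, form a product net with $\mathcal{N}_1$, and decide whether a marking satisfying $\mathcal{F}$ is reachable. The paper's argument is terser: it writes the product simply as $\mathcal{N}_1\,|\,\mathcal{N}'_2$ and discharges the final step by invoking the general semi-linear reachability result (Proposition~\ref{semi-linear-decidable.prop}, which itself rests on Hack's reduction and Mayr's theorem). You instead spell out the label-synchronized product explicitly---a useful clarification, since the paper's $|$ operator in Definition~\ref{netcom.def} is formally defined only for synchronization on shared \emph{transitions}, not labels---and you replace the semi-linear black box by a direct DNF plus token-draining reduction to exact reachability. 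That shortcut is legitimate here precisely because $\mathcal{F}$ is positive and built only from equality atoms $M(p)=i$, so each DNF clause describes a submarking. Both routes ultimately land on Theorem~\ref{Mayr.thm}; yours is more self-contained and avoids the Ginsburg--Spanier machinery, while the paper's is shorter by deferring to already-stated results.
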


\begin{proof}
The language of $\mathcal{N}_1$ is included in the 
free language of $\mathcal{N}_2$ if and only if
no marking satisfying $\mathcal F$ can be reached 
in $\mathcal{N}_1\,|\,\mathcal{N}'_2$ where 
$\mathcal{N}'_2$ is the complementary net of 
$\mathcal{N}_2$ and $\mathcal F$ is the logical
formula defining the final partial markings of 
$\mathcal{N}'_2$.
The latter reachability property can be decided in view 
of the Proposition~\ref{semi-linear-decidable.prop} 
recalled below in this appendix.
\end{proof}

In order to make the statement of 
Proposition~\ref{semi-linear-decidable.prop} 
understandable, let us recall first the basics of 
semi-linear sets and their decidable properties. 
Given a number $n\in\bbbn$, we consider the commutative monoid $(\bbbn^n,+)$
where $+$ denotes the componentwise addition of $n$-vectors and the null $n$-vector is the neutral element.
Typically, $n$ is the number of places of a Petri net and then $\bbbn^n$ is the realm
of all possible markings of this net (markings are seen as vectors in which each entry defines the number of tokens in the corresponding place for some fixed enumeration of the places of the net).

A subset $E\subseteq\bbbn^n$ is called {\em linear} if it is of the form
\[E=\{a+k_1{\cdot}b_1+\ldots+k_m{\cdot}b_m\mid k_1,\ldots,k_m\in\bbbn\}
\]
for some specific vectors $a\in\bbbn^n$ and $b_1,\ldots,b_m\in\bbbn^n$.
For example, let an unmarked net with $n$ places and a transition $t$ be given.
Then the set of markings enabling $t$ is linear, since any such marking $M$ can
be expressed as the following sum:
\[M\;\;=\;\;M_t+k_1{\cdot}b_1+\ldots+k_n{\cdot}b_n
\]
where $M_t$ is the (unique!) minimal marking enabling $t$ and the $b_1,\ldots,b_n$
are the unit vectors corresponding to the places of the net.
The natural numbers $k_1,\ldots,k_n$ simply describe excess tokens
which may be present in $M$ but are not needed for enabling $t$.

A subset $E\subseteq\bbbn^n$ is called {\em semi-linear}
if it is a finite union of linear sets. For example, if $t_1$ and $t_2$ are two transitions,
then the set of markings enabling $t_1$ or $t_2$ (or both) is semi-linear, since it is
the union of the set of markings enabling $t_1$ and the set of markings enabling $t_2$.

\begin{theorem}[Ginsburg and Spanier \cite{GS64}]\label{GS.thm}
The semi-linear subsets of $\bbbn^n$ form an effective boolean algebra.
\end{theorem}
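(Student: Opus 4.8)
The plan is to verify the three boolean operations in turn --- union, intersection, and complement --- and to check that each can be carried out effectively on the finite data (a list of linear sets, each given by a base point and a finite set of period vectors) that represents a semi-linear set. Closure under finite union is immediate and effective, since a finite union of finite unions of linear sets is again such a union, and the representation of the result is obtained by concatenating the input lists; likewise $\emptyset$ (empty union) and $\bbbn^n$ (base $0$, periods the unit vectors) are semi-linear. The two substantial points are closure under intersection and closure under complement, and for both I would lean on two auxiliary facts: (i) the set of nonnegative integer solutions of a system of linear Diophantine equations is semi-linear, with a computable representation; and (ii) a subset of $\bbbn^n$ is semi-linear if and only if it is definable in Presburger arithmetic.

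For intersection, by distributivity it suffices to intersect two linear sets $L=\{a+\sum_i k_i b_i\mid k_i\in\bbbn\}$ and $L'=\{a'+\sum_j k'_j b'_j\mid k'_j\in\bbbn\}$. A point lies in $L\cap L'$ exactly when the equation $\sum_i k_i b_i-\sum_j k'_j b'_j=a'-a$ has a solution in nonnegative integers $(k_i,k'_j)$; by fact (i) the solution set $S\subseteq\bbbn^{m+m'}$ is semi-linear and computable. Applying the affine integer map $(k_i,k'_j)\mapsto a+\sum_i k_i b_i$ to $S$, and using that the image of a semi-linear set under such a map is semi-linear (a base point maps to a base point, a period to $A$ times that period), I obtain that $L\cap L'$ is semi-linear with a computable representation. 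Fact (i) itself follows from the classical observation, via Dickson's lemma, that the monoid of solutions of the homogeneous system is finitely generated and that the inhomogeneous solutions form finitely many cosets of it; the finitely many minimal solutions can be enumerated within an effective bound.

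For complement I would route through fact (ii). The direction \emph{semi-linear} $\Rightarrow$ \emph{Presburger-definable} is easy: a linear set is defined by $\exists k_1\ldots\exists k_m\,(x=a+\sum_i k_i b_i)$, and a finite union is the disjunction of such formulas. For the converse I would put a Presburger formula in prenex form and apply quantifier elimination for Presburger arithmetic, obtaining an equivalent quantifier-free formula built from linear (in)equalities and congruence constraints $x\equiv r\pmod{q}$. Pushing negations inward to the atoms, I note that each atom \emph{and} the negation of each atom defines a semi-linear set: the half-space $\{x\mid\sum_i c_i x_i\le d\}\cap\bbbn^n$ and its complement $\{x\mid\sum_i c_i x_i\ge d+1\}$ are semi-linear, and $x\equiv r\pmod q$ together with its negation $\bigvee_{r'\ne r}x\equiv r'\pmod q$ are semi-linear. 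Using union closure and the intersection closure just established, the whole quantifier-free formula then defines a semi-linear set, computed effectively since quantifier elimination is an algorithm. This proves fact (ii). Closure under complement is now immediate: the complement of a semi-linear set is the negation of the corresponding Presburger formula, hence Presburger-definable, hence semi-linear, with the representation obtained by running the elimination procedure on that negation.

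The main obstacle is the complement step, that is, the hard direction of fact (ii). Everything else is essentially bookkeeping on base points and periods, but complementation genuinely requires the semi-linear class to be expressively closed, and the cleanest way to see this is the equivalence with Presburger-definability together with quantifier elimination. Care is needed to keep every stage effective: one must track explicit bounds on the minimal Diophantine solutions in fact (i) and on the moduli and coefficients produced by quantifier elimination, so that from finite representations of the inputs one actually computes a finite representation of each boolean combination.
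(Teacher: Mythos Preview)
The paper does not prove this theorem at all: it is stated in the appendix as a classical result, attributed to Ginsburg and Spanier \cite{GS64}, and followed only by an explanatory paragraph unpacking what ``effective boolean algebra'' means (closure under complement, intersection, union, with computable representations). There is therefore no proof in the paper to compare your proposal against.

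That said, your sketch is a faithful outline of the standard proof, and in fact of the argument in the original Ginsburg--Spanier paper, whose central contribution is precisely your fact~(ii): the coincidence of semi-linear sets with Presburger-definable sets. Your organisation is sound: union is trivial; intersection is reduced, via distributivity, to intersecting two linear sets, which amounts to solving a linear Diophantine system and then pushing the semi-linear solution set forward along an affine map; complement is obtained by passing to a Presburger formula, negating, eliminating quantifiers, and reading off a semi-linear set from the resulting boolean combination of linear (in)equalities and congruences using the already-established closure under union and intersection. There is no circularity, since intersection was handled directly. One small point worth tightening if you write this out in full: the claim that a single half-space $\{x\in\bbbn^n\mid \sum_i c_i x_i\le d\}$ is semi-linear is itself not entirely trivial when the $c_i$ have mixed signs, and deserves the same Dickson-lemma treatment you give to fact~(i); once that is in place, the argument is complete and effective throughout.
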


Thus, 
if $E$, $E_1$ and $E_2$ are semi-linear subsets of $\bbbn^n$,
then so are $\bbbn^n{\setminus}E$, $E_1\cap E_2$ and $E_1\cup E_2$.
The effectiveness part of Ginsburg and Spanier's theorem concerns the
possible description of semi-linear sets as linear expressions, and it states that
the expressions of a composed set (such as $E_1\cap E_2$) can be computed effectively from
the linear expressions of the constituent set(s) (such as $E_1$ and $E_2$).


\begin{proposition}\label{semi-linear-decidable.prop}
Given a PT-net system $\mathcal{N}=(P,T,F,M_0)$ and a 
semi-linear subset of markings $E\subseteq\mathbb{N}^n$, where $n=|P|$,
one can decide whether (some marking in) $E$ can be reached from $M_0$.  
\end{proposition}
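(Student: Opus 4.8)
The plan is to reduce this reachability question to the decidable problem of deciding reachability of a single target marking, which is covered by Mayr's theorem (Theorem~\ref{Mayr.thm}), by exploiting the structure of a semi-linear set together with the effectiveness of Ginsburg and Spanier's theorem (Theorem~\ref{GS.thm}). First I would recall that $E$ is given as a finite union $E=E_1\cup\ldots\cup E_r$ of linear sets, and that reachability of $E$ from $M_0$ holds if and only if reachability of $E_j$ holds for at least one $j$. Since $r$ is finite, it suffices to decide, for a single \emph{linear} set, whether it contains a marking reachable from $M_0$; the overall procedure then runs this test on each $E_j$ and answers yes if any test succeeds.

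So I would fix one linear set $E_j=\{a+k_1{\cdot}b_1+\ldots+k_m{\cdot}b_m\mid k_1,\ldots,k_m\in\bbbn\}$ with $a,b_1,\ldots,b_m\in\bbbn^n$. The idea is to build an augmented PT-net $\mathcal{N}'$ in which a ``drain'' phase can subtract, in whatever multiplicities, the generating vectors $b_i$ from the current marking, reducing any marking in $E_j$ down to the single base vector $a$. Concretely I would add a control place whose token switches the net from its normal operating mode into a drain mode (so that none of the original transitions of $\mathcal{N}$ can fire once draining begins), and for each generator $b_i$ a new transition $c_i$ that is enabled only in drain mode and that removes $b_i$ from the places (that is, $F(p,c_i)=b_i(p)$ on the input side and no output, with a side-condition loop on the control place). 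A marking $M\in[M_0\rangle$ lies in $E_j$ precisely when, after switching to drain mode, some sequence of the $c_i$ transitions leads exactly to the target marking $a$ (on the original places) together with the drain-mode control token. Thus reachability of $E_j$ from $M_0$ in $\mathcal{N}$ is equivalent to reachability of one fixed, fully specified target marking in $\mathcal{N}'$, which is decidable by Theorem~\ref{Mayr.thm}.

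The correctness argument is a short two-way check. For the forward direction, if $M_0[s\rangle M$ in $\mathcal{N}$ with $M=a+\sum_i k_i b_i$, then in $\mathcal{N}'$ one fires $s$, switches to drain mode, and fires each $c_i$ exactly $k_i$ times; since $M\geq a+\sum k_i b_i$ componentwise these firings are enabled and they land on the target marking. For the converse, any computation in $\mathcal{N}'$ reaching the target must first perform original transitions (producing some reachable $M$) and then, after the mode switch, fire only the $c_i$; because the $c_i$ purely subtract the $b_i$ and the control token blocks the original transitions, the reached marking equals $M-\sum_i k_i b_i=a$ for the numbers $k_i$ of times each $c_i$ fired, forcing $M=a+\sum_i k_i b_i\in E_j$.

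The main obstacle is not conceptual but definitional: Petri-net transitions can only subtract a \emph{fixed} nonnegative vector, so I must be careful that the drain transitions $c_i$ faithfully encode subtraction of $b_i$ and that nothing in the construction allows the original net to keep running or the target $a$ to be reached spuriously. The role of the control place and the side-condition loops is exactly to enforce this phase separation and thereby to pin down the reached marking to be \emph{exactly} $a$ rather than merely $\geq a$. Once this encoding is set up cleanly, the remaining steps---appealing to Theorem~\ref{GS.thm} for the effective linear decomposition of $E$ and to Theorem~\ref{Mayr.thm} for the single-marking reachability test---are routine, and iterating over the finitely many $E_j$ finishes the proof.
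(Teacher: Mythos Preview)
Your argument is correct and is essentially the same reduction the paper invokes: the paper simply cites Hack's Lemma~4.3 (which reduces semi-linear reachability to ordinary reachability) together with Theorem~\ref{Mayr.thm}, while you spell out that reduction explicitly via the ``drain'' construction. One small point worth tightening in your write-up is the mode switch: to block the original transitions after draining begins you actually need two complementary control places (a ``normal'' place carrying the initial token, side-looped on the original transitions, and a ``drain'' place, side-looped on the $c_i$) together with a switch transition moving the token from the first to the second; your text alludes to this but does not make the two places explicit.
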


The above proposition follows from Lemma 4.3 in \cite{Hack2} where 
the semi-linear reachability problem is reduced to the reachability 
problem, and from  Theorem \ref{Mayr.thm}.


In this paper, we use Proposition \ref{semi-linear-decidable.prop}
and Theorem \ref{GS.thm} in the special form as follows.
 

\begin{corollary}[]\label{decid.cor}
Let $\mathcal{N}$ be a PT-net system with initial marking $M_0$ and let $t_1$ and $t_2$ be two transitions.
The question whether there is some marking $M\in[M_0\rangle$ with
\begin{equation}\label{decid-cor.eq}
(\;M[t_1\rangle\wedge\neg\ M[t_2\rangle\;)\;\vee\;(\;\neg\ M[t_1\rangle\wedge M[t_2\rangle\;) 
\end{equation}
is decidable.
\end{corollary}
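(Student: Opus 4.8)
The plan is to express the condition~(\ref{decid-cor.eq}) as a semi-linear set of markings and then invoke the machinery already assembled in the appendix. First I would recall that, for any fixed transition $t$, the set of markings that enable $t$ is linear: it is precisely $\{M_t + k_1{\cdot}b_1 + \ldots + k_n{\cdot}b_n \mid k_1,\ldots,k_n\in\bbbn\}$, where $M_t(p)=F(p,t)$ is the unique minimal marking enabling $t$ and the $b_i$ are the unit vectors on the places of $N$. Denote this set by $E_{t}\subseteq\bbbn^n$ with $n=|P|$. In particular $E_{t_1}$ and $E_{t_2}$ are linear, hence semi-linear.

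Next I would build the target set corresponding to the disjunction in~(\ref{decid-cor.eq}). The marking $M$ satisfies the formula exactly when it lies in
\[
E \;=\; \bigl(E_{t_1}\cap(\bbbn^n{\setminus}E_{t_2})\bigr)\;\cup\;\bigl((\bbbn^n{\setminus}E_{t_1})\cap E_{t_2}\bigr).
\]
By Theorem~\ref{GS.thm} (Ginsburg and Spanier), the semi-linear subsets of $\bbbn^n$ form an \emph{effective} boolean algebra, so the complements $\bbbn^n{\setminus}E_{t_i}$, the intersections, and the final union are all semi-linear, and moreover a linear-expression description of $E$ can be computed effectively from the (explicitly known) descriptions of $E_{t_1}$ and $E_{t_2}$. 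Thus $E$ is an effectively given semi-linear set.

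Finally I would apply Proposition~\ref{semi-linear-decidable.prop}: given the net system $\mathcal{N}=(P,T,F,M_0)$ and the semi-linear set $E$, it is decidable whether some marking in $E$ is reachable from $M_0$. Since reachability into $E$ is exactly the assertion that some $M\in[M_0\rangle$ satisfies~(\ref{decid-cor.eq}), this settles the corollary.

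I do not expect a serious obstacle here, since every ingredient is already available in the appendix; the only point demanding a little care is the \emph{effectiveness} of the construction of $E$. One must check that the minimal enabling marking $M_t$ and the linear representation of each $E_{t_i}$ are obtained directly from the flow function $F$, and that the closure operations of Theorem~\ref{GS.thm} are genuinely effective, so that the semi-linear description handed to Proposition~\ref{semi-linear-decidable.prop} is concrete rather than merely existential. Once this is observed, the decidability follows immediately by composing the two appendix results.
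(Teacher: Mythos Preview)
Your proposal is correct and follows exactly the paper's own argument: show that the set of markings satisfying~(\ref{decid-cor.eq}) is semi-linear by combining the linearity of each enabling set with the boolean closure of Theorem~\ref{GS.thm}, then invoke Proposition~\ref{semi-linear-decidable.prop}. Your version is simply a more detailed rendering of the paper's three-line proof, including the helpful remark on effectiveness.
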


\begin{proof}
The set of all markings $M$ satisfying (\ref{decid-cor.eq}) is semi-linear.
This follows from Theorem \ref{GS.thm}, together with the fact that
the set of markings enabling a single transition is linear.
The claim now follows directly from Proposition \ref{semi-linear-decidable.prop}.
\end{proof}

We finally give a detailed proof of the claim made in the 
proof of Proposition~\ref{ksdjfbv}. 

\begin{claim}\label{sudhbv}
With the notations used in the proof of 
Proposition~\ref{ksdjfbv}
$(N\setminus D,M)\sim(N\setminus H\cup D,M)$ 
for any $M$ such that $M_0[\upsilon d\rangle M$ 
in $N$ for some
$\upsilon\in T^*$ {\bf iff}  
$\mathcal{N}_d\sim \mathcal{N}_d\setminus H'$.
\end{claim}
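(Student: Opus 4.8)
The plan is to use the two complementary places $p_d$ and $p'_d$ as a phase indicator, showing that every run of $\mathcal{N}_d$ splits into a ``before'' and an ``after'' part separated by the unique firing of $d'$, and then to match this split with the clocked structure of the INI condition for the fixed $d$. First I would establish the \emph{two-phase normal form} of runs of $\mathcal{N}_d$. Since every original transition $t$ satisfies $F(p_d,t)=F(t,p_d)=1$, the single initial token on $p_d$ is preserved as long as only original transitions fire; the transition $d'$ is the only one that removes it (depositing it on $p'_d$ instead), after which no original transition, and in particular no $d\in D$, is ever enabled again, while the primed copies $t'$ (which satisfy $F(p'_d,t')=F(t',p'_d)=1$) become available. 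Hence $d'$ fires at most once, no primed copy is enabled before it, and any firing sequence $\rho$ of $\mathcal{N}_d$ is either $\rho=\upsilon\in T^*$ (phase~1) or $\rho=\upsilon\,d'\,\sigma'$ with $\upsilon\in T^*$ and $\sigma'\in(L'\cup H')^*$, where $L'=\{l'\mid l\in L\}$. Crucially, because $d'$ acts on $P$ exactly as $d$, the marking reached on $P$ right after $\upsilon\,d'$ is precisely the $M$ with $M_0[\upsilon d\rangle M$ in $N$, and conversely every such $M$ arises this way.

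Next I would pin down the observations, recalling that in $\mathcal{N}_d$ all phase-1 transitions and $d'$ are observable, while in phase~2 only the copies $l'$ ($l\in L$) are observable and the copies $h'\in H'$ are hidden; moreover the alphabets $T$, $\{d'\}$ and $L'$ are pairwise disjoint. A phase-2 word of $\mathcal{L}(\mathcal{N}_d)$ therefore has the shape $\upsilon\,d'\,u'$ with $u'\in(L')^*$ the $L'$-projection of a run $\sigma'$ started on $P$ from $M$ using primed copies of $L\cup H$ only; since in phase~2 no transition of $D$ is enabled, this is exactly the observable behaviour of $(N\setminus D,M)$. Likewise, phase~2 of $\mathcal{N}_d\setminus H'$ retains only copies of $L$ and realizes exactly the (fully observable) behaviour of $(N\setminus(H\cup D),M)$. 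Thus, identifying $l'$ with $l$,
\[
\mathcal{L}(\mathcal{N}_d)=\{\upsilon\in T^*\mid M_0[\upsilon\rangle\}\ \cup\ \{\upsilon\,d'\,u\mid M_0[\upsilon d\rangle M,\ u\in\mathcal{L}(N\setminus D,M)\},
\]
and $\mathcal{L}(\mathcal{N}_d\setminus H')$ is the same expression with $\mathcal{L}(N\setminus(H\cup D),M)$ replacing $\mathcal{L}(N\setminus D,M)$.

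Finally I would read off the equivalence. The phase-1 contributions coincide, and the inclusion $\mathcal{L}(N\setminus(H\cup D),M)\subseteq\mathcal{L}(N\setminus D,M)$ always holds by Proposition~\ref{simndc}, so $\mathcal{N}_d\sim\mathcal{N}_d\setminus H'$ is equivalent to $\mathcal{L}(N\setminus D,M)\subseteq\mathcal{L}(N\setminus(H\cup D),M)$ for every $M$ with $M_0[\upsilon d\rangle M$, i.e.\ to $(N\setminus D,M)\sim(N\setminus(H\cup D),M)$ for all such $M$. For the forward reading one checks that a word $\upsilon\,d'\,u'\in\mathcal{L}(\mathcal{N}_d)$ witnessed by a run of $\mathcal{N}_d\setminus H'$ forces, by disjointness of the three alphabets and full observability of phase~1 and $d'$, that the run fires exactly $\upsilon$, then $d'$, then a fully observable phase-2 run projecting to $u'$; this turns the language identity into the per-$M$ equivalences and back. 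I expect the only real work to be this bookkeeping: verifying that the token on $p_d/p'_d$ enforces the clean two-phase split, and that the observability pattern makes $\mathcal{N}_d\sim\mathcal{N}_d\setminus H'$ quantify uniformly over \emph{exactly} the markings $M$ reachable by some $\upsilon d$, neither more nor fewer.
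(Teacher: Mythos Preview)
Your proposal is correct and follows essentially the same line as the paper's own proof: both arguments hinge on the observation that the complementary places $p_d,p'_d$ force every run of $\mathcal{N}_d$ into a unique decomposition $\upsilon$ or $\upsilon\,d'\,\sigma'$ with $\upsilon\in T^*$ and $\sigma'\in(L'\cup H')^*$, that on the $P$-component $d'$ behaves exactly like $d$, and that in $\mathcal{N}_d\setminus H'$ all transitions are observable so an observed word pins down the run uniquely. The only difference is presentational: you first derive closed-form descriptions of $\mathcal{L}(\mathcal{N}_d)$ and $\mathcal{L}(\mathcal{N}_d\setminus H')$ and then read off the equivalence, whereas the paper argues the two inclusions directly on individual firing sequences using the projection maps $\lambda,\lambda'$; the underlying idea and the key steps are the same.
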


\begin{proof}
We need examining closely the relationship between the 
firing sequences of $N$ and $N_d$.
Let $M_0[\upsilon d\rangle M$ be a firing sequence of $N$
and let $M[t_1\ldots t_n\rangle$ be a firing sequence of
$N\setminus D$. Then $M_{0d}[\upsilon d\rangle M_d$
in $\mathcal{N}_d$ where $M_d(p_d)=1$, $M_d(p'_d)=0$, and
$M_d(p)=M(p)$ for every place $p$ of $N$. Clearly, 
$M_d[t_1\ldots t_n\rangle$ is a firing sequence of 
$N_d\setminus D$. In a similar way, 
$M_{0d}[\upsilon d'\rangle M'_d$ in $\mathcal{N}_d$ where 
$M'_d(p_d)=0$, $M'_d(p'_d)=1$, and $M'_d(p)=M(p)$ for 
every place $p$ of $N$. Also clearly, 
$M'_d[t'_1\ldots t'_n\rangle$ 
is a firing sequence of $N_d\setminus D$.      
Conversely, consider now a firing sequence $M_{0d}[u\rangle$
in $\mathcal{N}_d$. If $d'$ does not occur in $u$, then
$M_0[u\rangle$ in $N$. If $u=\upsilon d'w$, then 
necessarily, $M_0[\upsilon d\rangle M$ for some $M$ in $N$,
and $w=t'_1\ldots t'_n$ for some sequence 
$t_1\ldots t_n\in (L\cup H)^*$ such that
$M[t_1\ldots t_n\rangle$ in $N$ and hence also in 
$N\setminus d$. 

Suppose that $(N\setminus D,M)\sim(N\setminus H\cup D,M)$
for any $M$ such that $M_0[\upsilon d\rangle M$ in $N$ for 
the fixed $d\in D$ and for some $\upsilon\in T^*$.
By construction, any sequence of transitions of  
$\mathcal{N}_d$ not including $d'$ is also a sequence of
transitions of $\mathcal{N}_d\setminus H'$. Now any 
sequence of transitions of $\mathcal{N}_d$ including $d'$
is of the form $M_{0d}[\upsilon d't'_1\ldots t'_n\rangle$, where
no transition from $H'$ occurs in $\upsilon$ and
$t'_1\ldots t'_n$ is the primed version of some sequence
$t_1\ldots t_n\in (L\cup H)^*$. Then, 
$M_0[\upsilon d\rangle M$ and $M[t_1\ldots t_n\rangle$ for
some $M$ in $N$. For all $t_j$ let 
$\lambda(t_j)=\varepsilon$ if $t_j\in H$ and 
$\lambda(t_j)=t_j$ otherwise. As    
$(N\setminus D,M)\sim(N\setminus H\cup D,M)$, one has also
$M[\lambda(t_1)\ldots\lambda(t_n)\rangle$. Therefore, if
we let $\lambda'(t'_j)=\varepsilon$ if $t'_j\in H'$ and 
$\lambda'(t'_j)=t'_j$ otherwise, then 
$M'_d[\lambda'(t'_1)\ldots \lambda'(t'_n)\rangle$ in
$N_d\setminus D$ where $M'_d$ is the marking of 
$\mathcal{N}_d$ defined with $M'_d(p_d)=0$, $M'_d(p'_d)=1$, 
and $M'_d(p)=M(p)$ for every place $p$ of $N$. As no 
transition from $H'$ occurs in 
$\upsilon d'\lambda'(t'_1)\ldots\lambda'(t'_n)$, this 
sequence is a firing sequence of $\mathcal{N}_d\setminus H'$. 
Thus, $\mathcal{N}_d\sim \mathcal{N}_d\setminus H'$.

In order to establish the converse implication, suppose now
that $\mathcal{N}_d\sim\mathcal{N}_d\setminus H'$.
Consider any two firing sequences $M_0[\upsilon d\rangle M$ 
and $M[t_1\ldots t_n\rangle$ of $N$ with 
$t_1\ldots t_n\in (L\cup H)^*$. By construction of 
$\mathcal{N}_d$, $M_{0d}[\upsilon d't'_1\ldots t'_n\rangle$. 
As no transition from $H'$ occurs in $\upsilon$, by the above 
assumption, 
$M_{0d}[\upsilon d'\lambda'(t'_1)\ldots\lambda'(t'_n)\rangle$ 
in $\mathcal{N}_d\setminus H'$ where
$\lambda'(t'_j)=\varepsilon$ if $t'_j\in H'$ and 
$\lambda'(t'_j)=t'_j$ otherwise.
Thus,
if we set $\lambda(t_j)=\varepsilon$ if $t_j\in H$ and 
$\lambda(t_j)=t_j$ otherwise, then 
$M_{0d}[\upsilon d\lambda(t_1)\ldots\lambda(t_n)\rangle$
by construction of $\mathcal{N}_d$.     . 
As a consequence, $M[\lambda(t_1)\ldots\lambda(t_n)\rangle$ 
in $N$ and hence also in
$N\setminus H\cup d$, concluding the proof of the claim.
\end{proof}
    
\end{document}